\newcommand*{\fancyrefexlabelprefix}{ex}
\newcommand*{\frefexname}{\text{example}}
\newcommand*{\Frefexname}{\text{Example}}
\newcommand*{\fancyrefdeflabelprefix}{def}
\newcommand*{\frefdefname}{\text{definition}}
\newcommand*{\Frefdefname}{\text{Definition}}
\newcommand*{\fancyrefthmlabelprefix}{thm}
\newcommand*{\frefthmname}{\text{theorem}}
\newcommand*{\Frefthmname}{\text{Theorem}}
\newcommand*{\fancyrefremlabelprefix}{rem}
\newcommand*{\frefremname}{\text{remark}}
\newcommand*{\Frefremname}{\text{Remark}}
\newcommand*{\fancyreflemlabelprefix}{lem}
\newcommand*{\freflemname}{\text{lemma}}
\newcommand*{\Freflemname}{\text{Lemma}}
\newcommand*{\fancyrefsubseclabelprefix}{subsec}
\newcommand*{\frefsubsecname}{\text{subsection}}
\newcommand*{\Frefsubsecname}{\text{Subsection}}
\newcommand*{\fancyrefcorlabelprefix}{cor}
\newcommand*{\frefcorname}{\text{corollary}}
\newcommand*{\Frefcorname}{\text{Corollary}}
\newcommand*{\fancyrefsubsubseclabelprefix}{subsubsec}
\newcommand*{\frefsubsubsecname}{\text{subsubsection}}
\newcommand*{\Frefsubsubsecname}{\text{Subsubsection}}
\newcommand*{\fancyrefalgolabelprefix}{algo}
\newcommand*{\frefalgoname}{\text{algorithm}}
\newcommand*{\Frefalgoname}{\text{Algorithm}}
\newcommand*{\fancyrefproplabelprefix}{prop}
\newcommand*{\frefpropname}{\text{proposition}}
\newcommand*{\Frefpropname}{\text{Proposition}}
\newtheorem{counter}{Counter}
\theoremstyle{definition}
\newtheorem{definition}[counter]{Definition}
\theoremstyle{plain}
\newtheorem{lemma}[counter]{Lemma}
\newtheorem{theorem}[counter]{Theorem}
\newtheorem{corollary}[counter]{Corollary}
\theoremstyle{remark}
\newtheorem{example}[counter]{Example}
\newtheorem{remark}[counter]{Remark}
\let\oldtextbf\textbf
\renewcommand{\textbf}[1]{\oldtextbf{\boldmath #1}}
\newcommand\myLL[1]{\LeftLabel{$^{\text{#1}}$}}
\newcommand\tabProb{\textbf{prob}}
\newcommand\notT{\textbf{notT}}
\newcommand\notF{\textbf{notF}}
\newcommand\andT{\textbf{andT}}
\newcommand\andF{\textbf{andF}}
\newcommand\basis{\mathcal{B}}
\newcommand\Prop{\mathsf{Prop}}
\newcommand\Rat{\mathbb{Q}}
\newcommand\Nat{\mathbb{N}}
\newcommand\Tm{\mathsf{Tm}}
\newcommand\Con{\mathsf{Con}}
\newcommand\lanPPJ{\mathcal{L_{\PPJ}}}
\newcommand\lan{\mathcal{L}}
\newcommand\lanPL{\mathcal{L}_{\PL}}
\newcommand\lanPPL{\mathcal{L}_{\PPL}}
\newcommand\lanMod{\mathcal{L}_{\Box}}
\newcommand\lanJ{\mathcal{L}_{\J}}
\newcommand\axP{\mathsf{(P)}}
\newcommand\axJ{\mathsf{(J)}}
\newcommand\axPlus{\mathsf{(+)}}
\newcommand\ANE{\mathsf{(AN!)}}
\newcommand\pos{\mathsf{(NN)}}
\newcommand\LOne{\mathsf{(L1)}}
\newcommand\LTwo{\mathsf{(L2)}}
\newcommand\AddOne{\mathsf{(Add1)}}
\newcommand\AddTwo{\mathsf{(Add2)}}
\newcommand\MP{\mathsf{(MP)}}
\newcommand\CE{\mathsf{(PN)}}
\newcommand\ST{\mathsf{(ST)}}
\newcommand\logic{\mathsf{L}}
\newcommand\CP{\mathsf{CP}}
\newcommand\D{\mathsf{D}}
\newcommand\PPJ{\mathsf{PPJ}}
\newcommand\PJ{\mathsf{PJ}}
\newcommand\PCP{\mathsf{PCP}}
\newcommand\PPCP{\mathsf{PPCP}}
\newcommand\PL{\mathsf{PL}}
\newcommand\PPL{\mathsf{PPL}}
\newcommand\J{\mathsf{J}}
\newcommand\K{\mathsf{K}}
\newcommand\LPPS{\mathsf{LPP^S}}
\newcommand\LPPTwo{\mathsf{LPP_2}}
\newcommand\LPPOne{\mathsf{LPP_1}}
\newcommand\LPlogic{\mathsf{LP}}
\newcommand\system{\mathcal{S}}
\newcommand\p{\mathsf{P}}
\newcommand\pspace{\mathsf{PSPACE}}
\newcommand\np{\mathsf{NP}}
\newcommand\conp{\mathsf{coNP}}
\newcommand\sig[2]{\mathsf{\Sigma^{#1}_{#2}}}
\newcommand\cclass{\mathsf{C}}
\newcommand\ih{\text{i.h.}}
\newcommand\myvec[1]{\boldsymbol{#1}}
\newcommand\op{\mathsf{\odot}}
\newcommand\algo{\mathsf{\mathcal{A}}}
\newcommand\mkr{\mathsf{M}}
\newcommand\true{\mathsf{T}}
\newcommand\false{\mathsf{F}}
\newcommand\powerset{\mathcal{P}}
\newcommand\subf{\mathsf{subf}}
\newcommand\evid{\mathcal{E}}
\newcommand\ext[1]{\bar{#1}}
\newcommand\sat{\mathsf{SAT}}
\newcommand\psat{\mathsf{PSAT}}
\newcommand\DSat{\D_{\sat}}
\newcommand\logicSat{\mathsf{L}_{\mathsf{SAT}}}
\newcommand\LPPTwoSat{\mathsf{LPP_{2,SAT}}}
\newcommand\LPPOneSat{\mathsf{LPP_{1,SAT}}}
\newcommand\JSat{\J_{\sat}}
\newcommand\PPLSat{\PPL_{\sat}}
\newcommand\PLSat{\PL_{\sat}}
\newcommand\PJSat{\PJ_{\sat}}
\newcommand\PPJSat{\PPJ_{\sat}}
\newcommand\cpnb{\mathsf{cpnb}}
\newcommand\cpnf{\mathsf{cpnf}}
\title{The Complexity of Satisfiability in 
Non-Iterated and Iterated Probabilistic Logics}
\author{Ioannis Kokkinis}
\affil{Technical University of Dortmund\\
\href{mailto:ioannis.kokkinis@tu-dortmund.de}{ioannis.kokkinis@tu-dortmund.de} 
}
\begin{document}

\maketitle

\begin{abstract}
Let L be some extension of
classical propositional logic. The non-iterated
probabilistic logic over L is the logic PL
that is defined by adding non-nested
probabilistic operators in the language of L. For example, in PL we can
express a statement like ``the probability of
truthfulness of A is at least 0.3'' where A is
a formula of L.
The iterated probabilistic logic over
L is the logic PPL, where the probabilistic
operators may be iterated (nested). For example,
in PPL we can express a statement like
``this coin is counterfeit with probability 0.6''.
In this paper we investigate the influence of
probabilistic operators in the complexity of
satisfiability in PL and PPL. 
We obtain complexity bounds, for the aforementioned
satisfiability problem, which
are parameterized in the complexity of satisfiability of conjunctions of positive
and negative formulas
that have neither a probabilistic nor a classical
operator as a top-connective.
As an application of our results we obtain tight complexity bounds
for the satisfiability problem in PL and PPL when
L is classical propositional logic or justification logic.

\end{abstract}

\textbf{keywords:} probabilistic logic, computational complexity, satisfiability, justification logic

\textbf{Acknowledgements.} The author is grateful to 
Antonis Achilleos and Thomas Studer for several
useful discussions and suggestions and to the anonymous
reviewers for many valuable comments and remarks that
helped him improve the quality of this
paper substantially. The initial part of 
this research was conducted when the author was
at the Institute of Computer Science (INF) in
Bern, Switzerland (financial support by the
Swiss National Science Foundation
project 153169, \emph{Structural Proof Theory and the Logic of Proofs}), and the final part when the author
was at the Lorraine Research Laboratory in 
Computer Science and its Applications (LORIA) in 
Nancy, France (financial support by the European 
Research Council project 313360, \emph{Epistemic 
Protocol Synthesis}).

\section{Introduction}

\subsection{Background and Related Work}

Probabilistic logics (also known as probability logics) are logics that can be
used to model uncertain reasoning. Although the
idea of probabilistic logic was first proposed by Leibniz~\cite{leibniz89},
the modern development of this topic started only in the 1970s 
and 1980s in the papers of
H. Jerome Keisler~\cite{kei77} and Nils Nilsson~\cite{nil86}. 
Following Nilsson's research, Fagin, Halpern and
Meggido~\cite{fahame90} introduced a logic with arithmetical 
operations built into the syntax
so that Boolean combinations of linear inequalities of 
probabilities of formulas
can be expressed. Based on Nilsson's research
Ognjanovi\'{c}, Ra\v{s}kovi\'{c} and 
Markovi\'{c}~\cite{ograma09} defined the logic
$\LPPTwo$, which is a non-iterated probabilistic logic with
classical base. The language of
$\LPPTwo$ is defined by adding (non-nested) 
operators of the form $P_{\geq s}$ (where $s$ is
a rational number) to the language of
classical propositional logic. In $\LPPTwo$ we can
have expressions of the form $P_{\geq s} \alpha$, which
read as ``the probability of truthfulness
of classical propositional formula
$\alpha$ is at least $s$''. In addition
to $\LPPTwo$, the authors of~\cite{ograma09} define the logic
$\LPPOne$, which is a probabilistic logic over
classical propositional logic, that allows
iterations (nesting) of the probabilistic operators ($P_{\geq s}$).
In $\LPPOne$ we can describe a situation like the following: let $c$ be a coin and let $p$
be the event ``$c$ lands tails''.
Assume that the probability 
of $c$ landing tails is at least
$60\%$ (because $c$ is counterfeit). We can express
this fact in non-iterated probabilistic
logics with the formula $P_{\geq 0.6} p$.
Assume now that we are uncertain about the
fact of $c$ being counterfeit.
In order to express this statement we need
nested applications of the probabilistic
operators. In $\LPPOne$ for example we can have a formula
like $P_{\geq 0.8} P_{\geq 0.6} p$. 

In addition to classical propositional logic,
probabilistic logics have been defined over several other logics
(see the recent~\cite{ognjanovicRM16} for an overview).
For example in \cite{komaogst,koogst}  we defined two
probabilistic logics over justification logic (\emph{probabilistic
justification logics} for short). Justification logic~\cite{artemovF16} can be understood as an explicit analogue of
modal logic~\cite{bmv01}. 
Whereas traditional modal logic uses formulas
of the form $\Box \alpha$ to express that an agent believes 
$\alpha$, the language of justification logic
`unfolds' the $\Box$-modality into a family of so-called 
\emph{justification terms}, which
are used to represent evidence for the agent's belief.
Hence, instead of $\Box \alpha$,
justification logic includes formulas of the form $t : \alpha$, where $t$ is a justification
term.  Formulas of the form $t: \alpha$ are
called \emph{justification assertions} and read as
\begin{center}
the agent believes $\alpha$ for reason $t$.
\end{center}
Justification terms can represent any kind of objects that we use as evidence:
for example proofs in Peano arithmetic or informal justifications (like everyday observations,
texts in newspapers, or someone's words).
Artemov developed the first 
justification  logic, the Logic of Proofs (usually 
abbreviated as $\LPlogic$), 
to provide intuitionistic logic with a 
classical provability 
semantics~\cite{Art95TR,Art01BSL}. Except from $\LPlogic$,
several other justification logics have been introduced.
The minimal justification logic is called $\J$~\cite{artemovF16}.
By the famous realization theorem~\cite{Art01BSL,Bre00TR}
$\J$ corresponds to the minimal modal logic
$\K$. That is, we can translate any theorem of $\J$ to a theorem
of $\K$ by replacing any term with the $\Box$ and also any theorem
of $\K$ to a theorem of $\J$ by replacing any occurrence
of $\Box$ with an appropriate justification term.

The non-iterated probabilistic logic over $\J$, the logic
$\PJ$, is defined in \cite{komaogst} and the iterated
probabilistic logic over $\J$, the logic $\PPJ$, is defined in \cite{koogst}. In $\PJ$ we can describe
a situation like the following:
assume that an agent reads in some reliable
newspaper that fact $\alpha$ holds and also that the agent hears
that fact $\alpha$ holds from some
unreliable neighbour. Then, the
agent has two justifications for $\alpha$: the text of
the newspaper, represented by $s$, and the words of 
their neighbour, represented by $t$.
We can express the fact that the newspaper is a more
reliable source than the neighbour using the $\PJ$-formulas
$P_{\geq 0.8} (s : \alpha)$ and $P_{\geq 0.2} (t : \alpha)$.
So, we can use probabilistic justification logic to model the idea that
\begin{center}
different kinds of evidence for $\alpha$
lead to different degrees of belief in $\alpha$.
\end{center}
It is tempting to try to model the above idea using formulas
of the form $s \to P_{\geq 0.8} \alpha$ (which are not allowed in the syntax of $\PJ$).
However this approach treats justifications as statements: in order for $P_{\geq 0.8} \alpha$
to hold, $s$ has to hold too. In our approach this is not necessary. We believe that
justifications do not need to be true, simply because one might want to
believe something for a false reason. Also this approach
places the uncertainty on top of formula $\alpha$. The approach of $\PJ$ places the
uncertainty where it should be: on the the fact that $s$ justifies $\alpha$. 

The most interesting property of the logic $\PPJ$ is the fact that the language
of $\PPJ$ allows applying
justification terms to probabilistic operators and vice versa (as we will see
later this is the property that makes finding
complexity bounds for the satisfiability problem in $\PPJ$ a challenging task).
So, continuing our example with the counterfeit coin, if $p$ is the event
``the coin lands tails'', and $t$ is some explicit reason to believe that,
then in $\PPJ$ we could have the formula $P_{\geq 0.8} (t : P_{\geq 0.3}p)$,
with a meaning like ``I am uncertain for a particular justification of this coin
being counterfeit, e.g. because this coin looks similar to a counterfeit coin
I have seen some time ago''.
As another application of $\PPJ$, in \cite{koogst} we have shown that the
lottery paradox~\cite{kyburg61} can be analysed in this logic.
The lottery paradox goes as follows: assume that
we have $1,000$ tickets in a lottery where every
ticket has the same probability to win and there is
exactly one winning ticket. Now assume a proposition is believed if and only 
if its degree of belief is greater than 0.99. In this setting it 
is rational to believe that ticket~1 does not win, it is 
rational to believe that ticket~2 does not win, and so on. 
However, this entails that it is rational to believe that no 
ticket wins because rational belief is closed under conjunction.
Hence, it is rational to believe that no ticket wins and (of course) it is rational
to believe that one  ticket wins, which is absurd. In \cite{koogst} we have
formalized the lottery paradox in $\PPJ$ and we have also proposed
a solution for avoiding the paradox via
restricting the axioms that are justified in $\PPJ$.

A model for a non-iterated probabilistic logic is a probability space
where the events are models of the base logic.
A model for an iterated probabilistic logic is a probability space where
the events contain models of the base logic and other probability spaces, so
that we can deal with iterated probabilities. One can say that the models for iterated probabilistic
logics look like Kripke structures, where the accessibility relation is replaced by
a probability measure. 
The satisfiability problem
for a probabilistic logic is to decide whether such a model that satisfies a given formula
exists. In the 1980s
Georgakopoulos et al.~\cite{georgakopoulosKP88}
studied a problem that is very similar to
the satisfiability problem in probabilistic logics.
This problem is called $\psat$ and it is
a probabilistic version of the famous satisfiability
problem in classical propositional logic (i.e. the well known  $\sat$-problem~\cite{papad94}).
The problem $\psat$ can be formalized as follows:
assume that we are given a formula in conjunctive normal form and
a probability for each clause. Is there a probability distribution (over the set of all 
possible truth assignments of the variables appearing in the clauses) that
satisfies all the clauses? Georgakopoulos et al.  reduced $\psat$
to solving a linear system, and proved that
$\psat$ is $\np$-complete. 
Although the expressive power of the
formal systems of~\cite{fahame90} and
\cite{ograma09} is richer than the one 
of~\cite{georgakopoulosKP88},
the authors of~\cite{fahame90} and
\cite{ograma09} were able to use 
arguments similar to those in \cite{georgakopoulosKP88}
to show that
the satisfiability problem in their logics is 
also $\np$-complete.
In \cite{kokkinis16} we obtained tight bounds
for the complexity of the satisfiability problem
in non-iterated
probabilistic justification logic, using again
some results from the theory of linear programming.
Fagin and Halpern~\cite{faginH94} mention (without giving a complete
formal proof) that complexity
bounds for the satisfiability problem in a modal logic
that allows nesting of the probabilistic operators (like in $\LPPOne$) can be obtained by
employing an algorithm based on a tableau construction
as in classical modal logic~\cite{halmos92}. In \cite{kokkinis17}
we used the idea of Fagin and Halpern in order to
obtain tight bounds for the complexity of satisfiability
in $\PPJ$.

\subsection{Our Contribution}

The goal of this paper is to summarize
and generalize the results
for the complexity of the satisfiability
problem in non-iterated and iterated probabilistic logics.
This paper is the extended journal version of
\cite{kokkinis16} and \cite{kokkinis17}
which were presented at ``Foundations of Information and Knowledge Systems" in 2016 and the ``11th Panhellenic Logic Symposium" in 2017 respectively. The results of
\cite{kokkinis16,kokkinis17} refer only to
probabilistic justification logic, whereas in the present
paper we make clear that our results can be applied
to an non-iterated and iterated probabilistic
logic over any extension of classical propositional logic.
Whereas the result of \cite{kokkinis16} is a straightforward
adaptation of some arguments from \cite{fahame90},
the result of \cite{kokkinis17} is new and non-trivial.
The fact that a tableaux method can be used for obtaining complexity bounds
for iterated probabilistic logics was already observed in \cite{faginH94}, but no formal
proof was given. In the short conference papers \cite{koogst} and \cite{kokkinis17}
we gave decidability and complexity proofs respectively for iterated probabilistic
justification logic. In the present paper we give the complexity (and thus
decidability proof) for $\PPJ$ in full detail using a tableaux method. More precisely, we present
upper and lower complexity bounds, for the aforementioned satisfiability 
problem, which are parameterized on the complexity of
satisfiability of conjunctions of positive
and negative formulas that have neither a probabilistic
nor a classical operator as a top-connective. We also show how
our results can be applied to the special cases where the
probabilistic logics are defined over classical propositional logic
or justification logic.

\subsection{Outline of the Paper}

In \Fref{sec:pre} we give some preliminary
definitions and prove a lemma from the theory of
linear programming that is necessary for our analysis.
In Sections~\ref{sec:non-iter} and
\ref{sec:iter} we obtain complexity bounds
for the satisfiability problem in non-iterated
and iterated probabilistic logics over
any extension of classical propositional
logic. In \Fref{sec:appl} we apply
the results of Sections~\ref{sec:non-iter} and
\ref{sec:iter} to determine the complexity of
satisfiability in
probabilistic logics over classical propositional
logic and over justification logic. In 
\Fref{sec:concl} we present our final remarks and
present some directions for further research.

\section{Preliminaries}

\label{sec:pre}

For the purposes of this paper a \emph{logic} is a
formal system, defined via a set of axioms and inference 
rules, a notion of semantics (i.e. a formal
definition of the notion of \emph{model} for the
logic), together with a provability and satisfiability
relation over some formal language. In this paper we
are interested in obtaining complexity bounds for
the following decision problem:

\begin{definition}[Satisfiability Problem]
Let $\logic$ be a logic over some language $\lan$. 
The satisfiability problem
for $\logic$ (denoted as $\logicSat$) is 
the following problem:
\begin{center}
given some $\alpha \in \lan$, is there a model
of $\logic$ that satisfies $\alpha$?
\end{center}
\end{definition}

For a formula $\alpha$ in the language of some logic $\logic$,
$\alpha$ is \emph{satisfiable} means that there is an 
$\logic$-model that satisfies $\alpha$. If the satisfiability in $\logic$ is defined in worlds of
the models, then $\alpha$ is satisfiable means that there is
a model of $\logic$, $M$, and a world $w$, such that $\alpha$ is 
satisfied in the world $w$ of $M$.
Since the satisfiability problem depends only
on semantical notions, we will present all the logics without the corresponding axiomatization.
The only exceptions are the
basic justification logic $\J$ and the iterated
probabilistic logic over $\J$, $\PPJ$, where
it is necessary to know what the axioms
of the logic are, for properly defining the models. 

All the logics in this paper are extensions of classical propositional logic. 
The following definition is very important for our analysis.

\begin{definition}[Basic Formulas]
Let $\logic$ be a logic over language $\lan$. The basic formulas of 
$\lan$ (represented as $\basis(\lan)$) are the formulas of $\lan$ that do not have
$\lnot$, $\land$ or a probabilistic operator $P_{\geq s}$ (the probabilistic
operators will be formally defined later) as their top-connectives. 
We assume  that $\basis(\lan)$ contains at least
$\Prop$, which is a countable set of atomic
propositions. We will
refer to the elements of $\basis(\lan)$ as the 
basic formulas of language $\lan$ or the basic formulas
of logic $\logic$.
\end{definition}

In the rest of the paper we fix a logic $\logic$ over
a language $\lan$. We assume that
$\logic$ is an extension of classical propositional
logic and that $\lan$ is defined by 
the following grammar:
\[
\alpha :: = b ~|~ \lnot \alpha ~|~ \alpha \land \alpha,
\]
where $b \in \basis(\lan)$. We assume that we are given
a function $v$ which assigns a truth value ($\true$ for
true and $\false$ for false) to elements of
 $\basis(\lan)$.
The extension of $v$
to the elements of $\lan$ is the function
$\ext{v}$, which is defined classically. Sometimes, we will abuse
notation and use the symbol $v$ in place of $\ext{v}$.
We will refer to $v$ (or its extension) as an \emph{evaluation}.
We use Greek lower-case letters like
$\alpha$, $\beta$, $\gamma$, $\ldots$ for
members of $\lan$. The symbol $\powerset$
stands for powerset.
We also define the following abbreviations in the
standard way:
\begin{align*}
\alpha \lor \beta &\equiv \lnot (\lnot \alpha \land
\lnot \beta)~; \\
\alpha \to \beta &\equiv \lnot \alpha \lor \beta~.
\end{align*}

From the above discussion it is clear that in order to define the
semantics of $\logic$ it suffices to determine which are the basic formulas
and how the evaluation behaves on them. For example, if we assume that the
basic formulas are atomic propositions (i.e. elements of $\Prop$) and that
the evaluation is a classical truth assignment, then we have defined the language and
semantics of classical propositional logic.

In the next sections we define probabilistic
logics over $\logic$. Models for these logics are
probability spaces where the events are models for $\logic$ (and in the
iterated case contain other probability spaces too).
In order to formally present these models, we need the following definitions:

\begin{definition}[$\sigma$-Algebra Over a Set]
Let $W$ be a non-empty set and let $H$ be a non-empty 
subset of $\powerset(W)$. We call
$H$ a \emph{$\sigma$-algebra over $W$}
if the following hold:
\begin{itemize}[topsep = 0em]
\item
$W \in H$~;
\item
$U \in H \Longrightarrow W \setminus U \in H$~.
\item
For any countable collection of elements of $H$, $U_0, U_1, \ldots$,
it holds that:
\[
\bigcup_{i \in \mathbb{N}} U_i \in H~.
\]
\end{itemize}
\end{definition}

\begin{definition}[$\sigma$-Additive Measure]
Let $H$ be a $\sigma$-algebra over $W$ and assume that $\mu : H \to [0,1]$.
We call $\mu$  a \emph{$\sigma$-additive measure} if the following hold:
\begin{enumerate}[topsep = 0em, label = (\arabic*)]
\item
$\mu(W) = 1$.
\item
Let $U_0, U_1, \ldots$ be a countable collection of pairwise disjoint elements of $H$. Then:
\[
\mu \left (\bigcup_{i \in \mathbb{N}} U_i \right ) = \sum_{i \in \mathbb{N}} \mu(U_i).
\]
\end{enumerate}
\end{definition}

\begin{definition}[Probability Space]
A \emph{probability space} is a structure
$\langle W, H, \allowbreak \mu\rangle$,
where:
\begin{itemize}[topsep=0em]
\item
$W$ is a non-empty set;
\item
$H$ is a $\sigma$-algebra over $W$;
\item
$\mu : H \to [0,1]$ is a $\sigma$-additive measure.
\end{itemize}
The members of $H$ are called \emph{measurable} sets.
\end{definition}

A \emph{finitely additive} measure can be defined by assuming a finite, instead of
a countable, union in the previous definitions.
Semantics for probabilistic logics over classical
propositional logic has been given both for
$\sigma$- and for finitely additive measures~\cite{ograma09}. Semantics
for probabilistic logics over justification logic~\cite{komaogst,koogst}
has been given only for finitely additive measures. However,
after the small model theorems that we will prove,
the probability spaces in the models will be finite, so the results of this paper
hold for the finitely additive case too.

As we mentioned in the introduction, decidability and
complexity results in probabilistic logics
heavily depend on results from the theory
of linear programming.
In this paper we will use a theorem
that provides bounds on
the size of a solution of a linear system
using the
sizes of the constants that appear in the system.
Before showing this result, we need to define the size for non-negative
integers and rational numbers and to present
\Fref{thm:lin_eq_thm}.
We use \textbf{bold} font for vectors.
The superscript $*$ in a vector denotes
that the vector represents a solution of some
linear system. 

\begin{definition}[Sizes]
Let $r$ be a non-negative integer.
The size of $r$, represented as $|r|$,
is the number of bits needed for representing
$r$ in the binary system. If 
$r = \frac{s_1}{s_2}$ is a
rational number, where $s_1$ and
$s_2$ are relatively prime non-negative
integers with $s_2 \neq 0$, then the size
of $r$ is
$|r| :=  |s_1| + |s_2|$.
\end{definition}

\begin{theorem}[\text{\cite[p. 145]{chvatal83}}]
\label{thm:lin_eq_thm}
Let $\system$ be a system of $r$ linear equalities. Assume 
that the vector $\myvec{x^*}$
is a solution of $\system$ such that all of
$\myvec{x^*}$'s entries are non-negative.
Then there is a vector $\myvec{y^*}$ such that
\begin{enumerate}[label=(\arabic*), topsep = 0em]
\item
$\myvec{y^*}$ is a solution of $\system$;
\item
all the entries of $\myvec{y^*}$ are non-negative;
\item
at most $r$ entries of $\myvec{y^*}$ are positive.
\end{enumerate}
\end{theorem}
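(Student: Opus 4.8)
The plan is to prove this by repeatedly eliminating positive coordinates until at most $r$ of them remain, which is the standard argument producing a basic feasible solution. First I would write $\system$ in matrix form as $A \myvec{x} = \myvec{b}$, where $A$ is an $r \times n$ matrix (one row per equality) and $n$ is the number of variables. The statement is immediate if $\myvec{x^*}$ already has at most $r$ positive entries, so I assume it has more and show that one can always construct a non-negative solution with strictly fewer positive entries; iterating this then yields the result.

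Let $J = \{ j : x^*_j > 0 \}$ and suppose $|J| > r$. The columns of $A$ indexed by $J$ are more than $r$ vectors lying in the $r$-dimensional space $\Rat^r$, hence they are linearly dependent. This dependency yields a nonzero vector $\myvec{d}$ with $d_j = 0$ for all $j \notin J$ and $A \myvec{d} = \myvec{0}$. Consequently $A(\myvec{x^*} + \lambda \myvec{d}) = \myvec{b}$ for every scalar $\lambda$, so every point on this line is again a solution of $\system$.

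The key step is to choose $\lambda$ so that the perturbed point stays non-negative while at least one positive coordinate is driven to zero. Replacing $\myvec{d}$ by $-\myvec{d}$ if necessary, I may assume $\myvec{d}$ has at least one negative entry. Setting $\lambda = \min\{ x^*_j / (-d_j) : d_j < 0 \}$, every coordinate with $d_j \geq 0$ can only increase or stay fixed, every coordinate with $d_j < 0$ remains non-negative by the choice of $\lambda$, and at least one such coordinate becomes exactly zero. Since $\myvec{d}$ is supported inside $J$, no coordinate that was zero becomes positive; hence the new point $\myvec{x^*} + \lambda \myvec{d}$ is a non-negative solution of $\system$ with strictly fewer positive entries than $\myvec{x^*}$.

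Iterating this construction strictly decreases the (finite, non-negative integer) number of positive coordinates at each step, so the process must terminate, necessarily with a non-negative solution $\myvec{y^*}$ of $\system$ having at most $r$ positive entries, as required. The only delicate point is the sign bookkeeping in the key step: one must verify simultaneously that non-negativity is preserved on every coordinate and that the support strictly shrinks, which is exactly what the support condition $d_j = 0$ for $j \notin J$ together with the minimizing choice of $\lambda$ guarantees.
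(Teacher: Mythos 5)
Your proof is correct. Note that the paper does not prove this statement at all: it is imported verbatim from Chv\'{a}tal \cite[p.~145]{chvatal83} and used as a black box in the proof of Theorem~\ref{thm:lin_ineq_eq_thm}, so there is no in-paper argument to compare against. What you give is the standard basic-feasible-solution argument --- a linear dependency $\myvec{d}$ supported on the positive coordinates, a move along the null direction (so that $A(\myvec{x^*}+\lambda\myvec{d})=\myvec{b}$ for all $\lambda$) with $\lambda$ chosen as the minimum ratio so that some positive coordinate hits zero, and iteration --- which is essentially the argument of the cited source; your handling of the two delicate points (non-negativity is preserved because $\lambda$ is the minimum ratio, and no new positive coordinate appears because $\myvec{d}$ vanishes off $J$) is complete.
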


\Fref{thm:lin_ineq_eq_thm} provides the anounced bounds
on the solution of a linear system.
A sketch of its proof was given
in \cite[Lemmata 2.5 and 2.7]{fahame90}.
To make our presentation complete, we provide
a detailed proof here.

\begin{theorem}
\label{thm:lin_ineq_eq_thm}
Let $\system$ be a linear system of
$n$ variables and of
$r$ linear equalities and/or 
inequalities with integer coefficients
each of size at most $l$.
Assume that the vector \mbox{$\myvec{x^*} =
x^*_1, \ldots, x^*_n$}
is a solution of $\system$ such that for all 
$i \in \{ 1, \ldots, n\}$, $x^*_i \geq 0$.
Then, there is a vector $\myvec{y^*} = y^*_1, \ldots, y^*_n$
that satisfies the following properties
\begin{enumerate}[label=(\arabic*), topsep = 0.3em]
\item
$\myvec{y^*}$ is a solution of $\system$;
\item
at most $r$ entries of $\myvec{y^*}$ are positive;
\item
for all $i$, $y^*_i$ is a non-negative
rational number with size bounded 
by
\[
2 \cdot \big ( r \cdot l+ r \cdot \log_2 (r) + 1 
\big )~.
\]
\end{enumerate}
\end{theorem}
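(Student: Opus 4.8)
The strategy is to bootstrap from \Fref{thm:lin_eq_thm}, which already gives us a sparse non-negative solution but says nothing about the \emph{size} of its entries. So I would proceed in two stages: first reduce the general system of equalities and inequalities to a system of pure equalities to which \Fref{thm:lin_eq_thm} applies, then bound the size of the resulting sparse solution using Cramer's rule.

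\begin{proof}[Proof sketch]
First I would eliminate the inequalities. Each inequality in $\system$, say $\sum_j a_j x_j \leq c$ (or $\geq$), is converted into an equality by introducing a fresh non-negative slack variable: $\sum_j a_j x_j + z = c$ (resp. $\sum_j a_j x_j - z = c$). This turns $\system$ into a system $\system'$ of exactly $r$ equalities, now in at most $n + r$ variables, still with integer coefficients of size at most $l$ (the slack columns contribute coefficients $\pm 1$, of size $1$). The original solution $\myvec{x^*}$ extends to a non-negative solution $\myvec{\tilde x^*}$ of $\system'$ by setting each slack variable to the non-negative residual. Now \Fref{thm:lin_eq_thm} applies to $\system'$ and yields a non-negative solution $\myvec{\tilde y^*}$ of $\system'$ with at most $r$ positive entries. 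Dropping the slack coordinates gives a vector $\myvec{y^*}$ that is a non-negative solution of the original $\system$ and has at most $r$ positive entries among the original variables, establishing properties (1) and (2).

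For property (3) I would bound the sizes of the entries of $\myvec{\tilde y^*}$. Since at most $r$ entries are positive, set the (at most $n + r - r = n$) zero entries aside and consider the $r$ equalities restricted to the columns of the positive variables: this is a system $A \myvec{w} = \myvec{c}$ where $A$ has at most $r$ columns and exactly $r$ rows, and $\myvec{w}$ (the positive entries) is a solution. By discarding dependent rows we may take $A$ square and invertible of some order $k \leq r$, so Cramer's rule gives each positive entry as a ratio of two $k \times k$ integer determinants, $w_i = \det(A_i)/\det(A)$. The key step is then Hadamard's inequality: a $k\times k$ integer matrix whose entries have absolute value at most $2^l$ has determinant bounded in absolute value by $k^{k/2} (2^l)^k$, hence of size at most $k \cdot l + \tfrac{k}{2}\log_2 k + 1$. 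Since $w_i$ is the quotient of two such determinants and its size is at most the sum of their sizes (plus the reduction to lowest terms only decreases the size), with $k \leq r$ one obtains the stated bound $2 \cdot \big( r \cdot l + r \cdot \log_2(r) + 1 \big)$.

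The main obstacle is the bookkeeping in stage two: one must be careful that the slack coefficients do not inflate the size bound (they are $\pm 1$, so the per-entry bound $2^l$ still holds with $l \geq 1$), and that the matrix $A$ extracted for Cramer's rule genuinely has full row rank after pruning dependent equations — this is where the choice of a \emph{minimal} set of tight equations matters. The factor of $2$ in the final bound absorbs the sum of numerator and denominator determinant sizes, and the constants in the Hadamard estimate ($k^{k/2}$ contributing $\tfrac{k}{2}\log_2 k \leq r\log_2 r$) are what force the $r \cdot \log_2(r)$ term. Everything else is routine arithmetic on sizes.
\end{proof}
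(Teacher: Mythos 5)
Your overall strategy---reduce to a pure equality system, invoke \Fref{thm:lin_eq_thm} for sparsity, then bound the entries via Cramer's rule and a determinant estimate---is the same as the paper's, but your reduction step fails on strict inequalities, which the theorem must cover: the paper's proof explicitly allows $\op \in \{<, \leq, \geq, >\}$, and in the application (\Fref{thm:smp_PL}) the linear systems contain constraints of the form $\sum_k x_k < s$ coming from negated probabilistic operators. If you convert $\sum_j a_j x_j < c$ into $\sum_j a_j x_j + z = c$ with $z \geq 0$, nothing forces the sparse solution returned by \Fref{thm:lin_eq_thm} to keep the slack $z$ positive; if it sets $z = 0$, the vector you read off satisfies $\sum_j a_j y_j = c$ and is \emph{not} a solution of $\system$, so property (1) fails. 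Requiring $z > 0$ is not an option, since \Fref{thm:lin_eq_thm} only preserves non-negativity. This is exactly why the paper replaces each inequality by the equality $\sum_j a_j x_j = \sum_j a_j x^*_j$, whose right-hand side is the value achieved by the given solution: every solution of the replaced equality then automatically satisfies the original inequality, whether strict or not.

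The second gap is in your Cramer step. After a single application of \Fref{thm:lin_eq_thm}, the submatrix formed by the columns of the (at most $r$) positive variables need not have full column rank, so ``discarding dependent rows'' does not produce a square invertible system having your sparse vector as its unique solution. Take $r = 2$ with the duplicated equation $x_1 + x_2 = 2$, $x_1 + x_2 = 2$: the vector $(1,1)$ is a legitimate output of \Fref{thm:lin_eq_thm} (it has at most $r$ positive entries), its support matrix has rank $1$, and after deleting the dependent row the surviving equation does not determine $(1,1)$, so Cramer's rule tells you nothing about its entries. What is missing is an iteration: whenever the current system has fewer independent equations than positive variables, you must re-apply the sparsification to the reduced system, alternating with the removal of dependent equations, until you reach a square system with non-zero determinant; only the unique solution of that final system can be fed to Cramer's rule and the determinant bound (the paper uses $r!\cdot(2^l-1)^r$ rather than Hadamard, but either estimate suffices). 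This loop is precisely the content of steps (i)--(vi) of the paper's algorithm. Your closing remark about choosing a ``minimal set of tight equations'' gestures at this issue but does not supply the argument.
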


\begin{proof}
We make the following conventions:

\begin{itemize}
\item 
All vectors used in this proof have $n$ entries. 
The entries of the vectors
are assumed to be in one to one
correspondence with the variables that appear in the
original system $\system$.
\item 
Let $\myvec{y}^*$ be
a solution of a linear system $\mathcal{T}$.
If $\myvec{y}^*$ has more entries than the 
variables of $\mathcal{T}$
we imply that
entries of $\myvec{y}^*$ that correspond to
variables appearing in $\mathcal{T}$
compose a solution of $\mathcal{T}$.
\item 
Assume that system
$\mathcal{T}$ has less variables than system
$\mathcal{T}'$. When we say that any solution
of $\mathcal{T}$ is a solution of $\mathcal{T}'$ we imply
that the missing variables are set to $0$.
\end{itemize}
Assume that the original
system $\system$ contains an inequality of
the form:
\begin{equation}
\label{eq:ineqEx}
b_1 \cdot x_{1} + \ldots + b_n \cdot x_{n} 
~\op~ c~,
\end{equation}
for $\op \in \{ < , \leq , \geq, >\}$ where
$x_{1}, \ldots , x_n$ are variables 
and $b_1 , \ldots ,$
$b_n, c$ are constants that appear in
$\system$. Vector
$\myvec{x^*}$ is a solution of \eqref{eq:ineqEx}.
We replace the inequality
\eqref{eq:ineqEx} in $\system$ with the
following equality:
\[
b_1 \cdot x_1 + \ldots + b_n \cdot x_{n} = 
b_1 \cdot x^*_1 + \ldots + b_n \cdot x^*_n~.
\]
We repeat this procedure for every inequality of $\system$.
This way we obtain a system of linear equalities which we call $\system_0$. 
It is easy to see that $\myvec{x^*}$ is
a solution of $\system_0$ and
that any solution of $\system_0$ is also a
solution of $\system$.

Now we will transform $\system_0$ to another 
linear system by applying the following
algorithm:
\begin{enumerate}[label = (\roman*)]
\item
Set $i := 0$, $e_0 := r$, $v_0 := n$, 
$\myvec{x^{*,0}} := \myvec{x^*}$.
Go to step \ref{enum:linTransEqVarCheck}.
\item
\label{enum:linTransEqVarCheck}
If $e_i = v_i$ then go to step
\ref{enum:linTransEqNonZeroDetVarCheck}.
Otherwise go to step \ref{enum:linTransIneqVarCheck}.
\item
\label{enum:linTransEqNonZeroDetVarCheck}
If the determinant
of $\system_i$ is non-zero then stop.
Otherwise go to step \ref{enum:linTransDepend}.
\item
\label{enum:linTransIneqVarCheck}
If $e_i < v_i$ then go to step \ref{enum:linTransEqThm},
else go to step \ref{enum:linTransDepend}.
\item
\label{enum:linTransEqThm}
We know that the vector $\myvec{x^{*,i}}$
is a non-negative
solution for the system $\system_i$.
From \Fref{thm:lin_eq_thm} we obtain a solution $\myvec{x^{*,i+1}}$
for the system
$\system_i$ which has at most $e_i$ entries positive.
In $\system_i$ we replace the variables that
correspond to zero entries of the solution $\myvec{x^{*,i+1}}$
with zeros. We obtain a new system
which we call $\system_{i+1}$ with 
$e_{i+1} = e_i$ equalities
and $v_{i+1} = e_i < v_i$ variables. 
Vector $\myvec{x^{i+1}}$ is a solution of $\system_{i+1}$
and any solution of $\system_{i+1}$ is a solution
of $\system_i$.
We set $i := i+1$ and
we go to step \ref{enum:linTransEqVarCheck}.
\item
\label{enum:linTransDepend}
We remove only one equation that
can be written as a linear combination of some
others. We obtain a new system
which we call $\system_{i+1}$ with 
$e_{i+1} = e_i -1$ equalities
and $v_{i+1} = v_i$ variables. We set $i := i+1$ and
$\myvec{x^{*,i+1}} := \myvec{x^{*,i}}$.
We go to step \ref{enum:linTransEqVarCheck}.
\end{enumerate}
From steps \ref{enum:linTransEqThm} and
\ref{enum:linTransDepend} it is clear that
during the execution of the above algorithm,
the sum of the number of variables and
equations decreases. Therefore, the algorithm
terminates.

Let $I$ be the final value of $i$ after the execution of
the algorithm. Since the only way for our algorithm to
terminate is through step \ref{enum:linTransEqNonZeroDetVarCheck}
it holds that system $\system_I$
is an $e_I \times e_I$ system of linear equalities with
non-zero determinant (for $e_I \leq r$).
System $\system_I$ is obtained from system
$\system_0$ by possibly replacing some variables that correspond
to zero entries of the solution with zeros
and by possibly removing some equalities (that
have a linear dependence on others). So,
any solution of $\system_I$ is also a solution of
$\system_0$ and thus a solution of
$\system$. From the algorithm we have that
$\myvec{x^{*,I}}$ is a solution of
$\system_I$. Since $\system_I$ has a non-zero
determinant Cramer's rule can 
be applied. Hence, 
the vector $\myvec{x^{*,I}}$
is the unique solution of system $\system_I$.
Let $x^{*,I}_i$ be an entry of
$\myvec{x^{*,I}}$. Entry
$x^{*,I}_i$ is equal to the following rational number:
\[
\frac{
\begin{vmatrix}
a_{11} & \ldots & a_{1e_I}\\
& \ddots & \\
a_{e_I1} & \ldots & a_{e_Ie_I}
\end{vmatrix}
}{
\begin{vmatrix}
b_{11} & \ldots &  b_{1e_I}\\
& \ddots & \\
b_{e_I1} & \ldots & b_{e_Ie_I}
\end{vmatrix}
}~,
\]
where all the $a_{ij}$ and $b_{ij}$ are integers that
appear in the original system $\system$. By properties
of the determinant we know that the numerator
and the denominator
of the above rational number
will each be at most equal to
$r! \cdot (2^l - 1) ^ r$. So we have that:
\begin{align*}
|x^{*,I}_i| & \leq 2 \cdot \big 
(\log_2 (r! \cdot (2^l - 1) ^ r)  + 1 \big ) 
& \Longrightarrow\\
|x^{*,I}_i| & \leq 2 \cdot 
\big ( \log_2 (r^r \cdot 2^{l \cdot r})  + 1 \big ) 
& \Longrightarrow\\
|x^{*,I}_i| & \leq 2 \cdot 
\big ( r \cdot \log_2 (r) + l \cdot r  + 1 \big )~.
\end{align*}
As we already mentioned the final
vector $\myvec{x^{*,I}}$ is a solution of the original linear
system $\system$. 
We also have
that all the entries of $\myvec{x^{*,I}}$ are non-negative,
at most $r$ of its entries are positive and
the size of each entry of $\myvec{x^{*,I}}$ is bounded by
$2 \cdot ( r \cdot \log_2 r + r \cdot l + 1)$. 
So, $\myvec{x^{*,I}}$ is the desired
vector $\myvec{y^*}$.
\end{proof}

\section{Non-Iterated Probabilistic Logics}

In \Fref{subsec:non-iter-sem} we define the
semantics for non-iterated probabilistic logics.
In \Fref{subsec:non-iter-smp} we prove a small
model property
and in \Fref{subsec:non-iter-compl} we present a conditional complexity upper bound
for these logics. 
\label{sec:non-iter}

\subsection{Semantics}

\label{subsec:non-iter-sem}

The non-iterated probabilistic logic over $\logic$
is the logic $\PL$. The language of $\PL$
is defined by adding
non-nested probabilistic operators to the language
$\lan$.  
Formally, $\lanPL = \lanPL' \cup \lan$, where
$\lanPL'$ is described by the following grammar:
\[
A :: = P_{\geq s} \alpha ~|~ \lnot A ~|~ A \land A~,
\]
where $s \in \Rat \cap [0,1]$ 
and $\alpha \in \lan$. Recall that by definition, the basic formulas of $\PL$
are the formulas of $\lanPL$ that do not have $\lnot, \land$ or a probabilistic
operator as a top connective. Hence we have that
$\basis(\lanPL) = \basis(\lan) \supseteq \Prop$, i.e.
$\lanPL$ has the same basic formulas as $\lan$.
The intended meaning of the formula
$P_{\geq s} \alpha$ is that
``the probability of truthfulness for 
$\alpha$ is at least $s$''. For $\lanPL$,
we assume the same abbreviations as for $\lan$.
The operator $P_{\geq s}$ is assumed to have greater 
precedence than all the connectives of
$\lan$. We also define the following syntactical
abbreviations:
\begin{align*}
P_{< s} \alpha & \equiv \lnot P_{\geq s} \alpha~; \\
P_{\leq s} \alpha & \equiv  P_{\geq 1 - s} \lnot 
\alpha~; \\
P_{> s} \alpha  & \equiv \lnot P_{\leq s} 
\alpha~; \\
P_{= s} \alpha & \equiv P_{\geq s} \alpha \land 
P_{\leq s} \alpha~.
\end{align*}
We use capital Latin letters like $A, ~B, ~C, ~\ldots$ for
members of $\lanPL'$ possibly primed or with
subscripts. 

\begin{remark}
In the literature non-iterated logics either contained~\cite{ograma09} or did not
contain~\cite{komaogst} the formulas of the base logic. In this paper we opted
for the first choice. This makes our approach more uniform since all of our
logics are extension of classical propositional logic. We have to point out that
as far as decidability and complexity is concerned, both approaches are practically
the same: if the language of the non-iterated probabilistic logic contains formulas
of the base logic, then we simply have to use the decidability algorithm for the
base logic too.
\end{remark}

A model for $\PL$ is a probability
space where the events (also called worlds) are
models for $\logic$.
In order to determine the probability of truthfulness for
an $\lan$-formula $\alpha$ in such a
probability space we have to find the
measure of the set containing all
$\lan$-models that satisfy $\alpha$. More formally, we have the following:

\begin{definition}[$\PL$-Model]
Let $M = \langle W, H, \allowbreak \mu, v 
\rangle$ where
\begin{itemize}[topsep = 0em]
\item
$\langle W, H, \mu \rangle$ is a probability
space~;
\item
$v$ is a function that assigns an evaluation
to every $w$ in $W$. 
We write $v_w$ instead of $v(w)$.
\end{itemize}
$M$ is a $\PL$-model if
$[\alpha]_M \in H$ for every
$\alpha \in \lan$, where
\[
[\alpha]_M =
\{ w \in W ~|~ v_w (\alpha) = \true\}~.
\]
We will drop the subscript $M$, i.e.~we 
will simply write $[\alpha]$, if this causes no confusion. 
\end{definition}

\begin{definition}[Truth in a $\PL$-model]
Let $M = \langle W, H, \mu, v \rangle$ be a
$\PL$-model. The truth
of $\lanPL'$-formulas that
have a probabilistic operator as their top-connective
is defined as follows (the formulas with top-connectives
$\lnot$ and $\land$ are treated classically):
\[
M \models P_{\geq s} \alpha \Longleftrightarrow 
\mu([\alpha]_M) \geq s~.
\]
Also if $\alpha$ is an $\lan$-formula then
\[
M \models \alpha \Longleftrightarrow [\alpha]_M = W~.
\]
\end{definition}
We observe that in order to present the formal semantics of
a non-iterated probabilistic logic, we have to define the
basic formulas and explain how the evaluation behaves on
them.

\subsection{Small Model Property}

\label{subsec:non-iter-smp}

In this subsection we show that if $A \in \lanPL'$ is satisfiable
then it is satisfiable in a model that satisfies the following properties:
\begin{itemize}
\item
the number of worlds and the probabilities assigned to them have size polynomial in the size of $A$
\item
the evaluations assigned to every world depend only on the subformulas of $A$.
\end{itemize}
After we have established this result, it
is easy to obtain the upper complexity bound for $\PL$:
we can simply guess the model
in polynomial time and then, with the help of
some oracles, verify that it satisfies $A$.

First we need some definitions.
The set of subformulas of some formula
$A$, represented as $\subf(A)$, is defined as usual.
The size of $A$, represented as $|A|$,
is the number of symbols needed to write $A$.
In order to compute $|A|$, the size of every
probabilistic operator counts as one.
For example, $|\lnot P_{\geq \frac{1}{5}} p| = 3$.
For $A \in \lanPL$ we define
\[
||A|| := \max \big  \{ |s| ~ \big |~ P_{\geq s } \alpha 
\in \subf(A) \big \}~.
\]

\begin{definition}[Conjunctions of Positive
and Negative Basic Formulas]
Let $A \in \lanPL$. The set
of conjunctions
of positive and negative basic formulas
of $A$ is the following set:
\[
\cpnb(A) = \left \{ a ~ \Bigg |~ a\text{ is of the form }\bigwedge_{B \in \subf(A) \cap \basis(\lan)} \pm B \right \} ~,
\]
where $\pm B$ denotes either $B$ or $\lnot B$.
The acronym $\cpnb$ stands for $\mathsf{c}$onjunction of
$\mathsf{p}$ositive and
$\mathsf{n}$egative $\mathsf{b}$asic formulas.
If $a \in \cpnb(A)$ for some $A$ and
there is no danger of confusion we may say
that $a$ is $\cpnb$-formula. We use the 
lower-case Latin letter $a$ for $\cpnb$-formulas, 
possibly with subscripts.
\end{definition}
Let $A$ be of the form
$\bigwedge_{i} B_i$ or of the form
$\bigvee_{i} B_i$. Then
$C \in A$ means that for some $i$, $B_i \equiv C$.

\Fref{thm:smp_PL} proves the announced small model property.
It is an adaptation of
the small model Theorem 2.6 of~\cite{fahame90}.
In \cite{kokkinis16} the
proof of the small property unnecessarily depends
on the completeness theorem for $\PJ$. In  \Fref{thm:smp_PL} we remedy
this mistake. 
\begin{theorem}[Small Model Property for $\PL$]
\label{thm:smp_PL}
Let $A \in \lanPL'$. If $A$ is $\PL$-satisfiable then it is satisfiable
in a $\PL$-model $M = \langle W, H, \mu, v \rangle$ such that
\begin{enumerate}[label=(\arabic*)]
\item
\label{enum:smpWorlds}
$|W| \leq |A|$;
\item
$H = \powerset (W)$;
\item
For every $w \in W$,
$\mu (\{ w \})$ is a non-negative
rational number with size at most
\[
2 \cdot \big ( |A| \cdot || A || + |A| \cdot \log_2 (|A|) + 1\big );
\]
\item
\label{enum:smpAtoms}
For every $a \in \cpnb(A)$, there exists
at most one $w \in W$ such that $\ext{v}_w(a) = \true$.
\end{enumerate}
\end{theorem}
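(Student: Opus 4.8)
The plan is to take an arbitrary $\PL$-model of $A$ and collapse it into a small one, replacing its probability space by a solution of a linear program obtained from \Fref{thm:lin_ineq_eq_thm}. This is the semantic version of the small-model argument of \cite{fahame90}, and being purely model-theoretic it needs no completeness theorem. First I would fix a model $M' = \langle W', H', \mu', v' \rangle$ with $M' \models A$ and list the probabilistic subformulas $P_{\geq s_1} \alpha_1, \ldots, P_{\geq s_m} \alpha_m$ occurring in $\subf(A)$. Each $\alpha_j$ lies in $\lan$, so in $M'$ the atom $P_{\geq s_j} \alpha_j$ is simply either true (when $\mu'([\alpha_j]_{M'}) \geq s_j$) or false (when $\mu'([\alpha_j]_{M'}) < s_j$); I record this truth value as $t_j$. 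Since $A$ is a Boolean combination of these $m$ atoms, any $\PL$-model in which each atom has the value $t_j$ must also satisfy $A$, so the whole task reduces to building a small model that reproduces the numbers $\mu'([\alpha_j]_{M'})$ on the correct side of $s_j$.

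Next I would set up the linear program. Every evaluation satisfies exactly one $\cpnb$-formula, since a $\cpnb$-formula is a complete signed conjunction over $\subf(A) \cap \basis(\lan)$; this partitions $W'$ into the classes $W'_a = \{ w \in W' : v'_w(a) = \true \}$. Moreover the truth value of each $\alpha_j$ under an evaluation is fixed by the $\cpnb$-formula it satisfies, so, writing $a \models \alpha_j$ when $a$ forces $\alpha_j$ true, one has $\mu'([\alpha_j]_{M'}) = \sum_{a \models \alpha_j} \mu'(W'_a)$. I introduce one variable $x_a$ for each $\cpnb$-formula $a$ that is \emph{realized} in $M'$ (that is, $W'_a \neq \emptyset$) and form the system $\system$ consisting of the normalization $\sum_a x_a = 1$ together with, for each $j$, the constraint $\sum_{a \models \alpha_j} x_a \geq s_j$ if $t_j = \true$ and $\sum_{a \models \alpha_j} x_a < s_j$ if $t_j = \false$. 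Clearing the denominator of each $s_j$ turns these into constraints with integer coefficients and constants of size at most $||A||$, and the vector whose $a$-th entry is $\mu'(W'_a)$ is a non-negative solution. Restricting to realized $\cpnb$-formulas is the point that guarantees the reduced solution will be realizable by genuine evaluations.

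Then I apply \Fref{thm:lin_ineq_eq_thm}. The system has $r = m + 1$ rows, and since the $m$ distinct probabilistic subformulas do not nest (each $\alpha_j \in \lan$) and contribute at least two symbols each to $A$, one checks $m + 1 \leq |A|$; its integer coefficients have size $l \leq ||A||$. The theorem yields a non-negative solution $\myvec{y^*}$ with at most $r \leq |A|$ positive entries, each a rational of size at most $2 \cdot ( r \cdot l + r \cdot \log_2(r) + 1 ) \leq 2 \cdot ( |A| \cdot ||A|| + |A| \cdot \log_2(|A|) + 1 )$. From $\myvec{y^*}$ I build $M = \langle W, \powerset(W), \mu, v \rangle$: for each realized $a$ with $y^*_a > 0$ I keep one world $w_a$ whose evaluation is that of some element of $W'_a$ (nonempty because $a$ is realized), and I set $\mu(\{ w_a \}) = y^*_a$. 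This directly yields the bound $|W| \leq |A|$, the choice $\powerset(W)$ for the algebra, and the stated size bound on the point masses.

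Finally, for the verification, $\mu$ is a probability measure by the normalization equation and $M$ is a $\PL$-model since its algebra is $\powerset(W)$. For each $j$ we get $\mu([\alpha_j]_M) = \sum_{a \models \alpha_j} y^*_a$, which stands to $s_j$ exactly as $\mu'([\alpha_j]_{M'})$ did, so every atom keeps its value $t_j$ and hence $M \models A$; property \ref{enum:smpAtoms} holds because distinct worlds carry distinct $\cpnb$-formulas while each evaluation satisfies exactly one $\cpnb$-formula. I expect the main obstacle to be the coupling of two requirements when invoking \Fref{thm:lin_ineq_eq_thm}: the reduced distribution must both keep all atoms on the correct side of their thresholds (forcing the system to carry genuine $\geq$ and strict $<$ inequalities rather than equalities) and be supported on $\cpnb$-formulas that some evaluation actually satisfies (handled by ranging the variables only over realized $\cpnb$-formulas). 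The secondary care points are the counting $m + 1 \leq |A|$ and the translation of $||A||$ into the integer-coefficient size bound $l$.
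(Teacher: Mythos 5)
Your proposal is correct and follows essentially the same route as the paper's proof: turn the probabilistic content of a given model into a linear system indexed by $\cpnb$-formulas, apply \Fref{thm:lin_ineq_eq_thm} to obtain a non-negative solution with at most $|A|$ positive entries of bounded size, and rebuild a small model whose worlds are evaluations witnessing the atoms that carry positive mass. Two organizational differences are worth recording. First, where the paper rewrites $A$ as $\bigvee_i \bigwedge_j P_{\op_{ij} s_{ij}}(\alpha^{ij})$ and extracts the system from one satisfied disjunct, you skip the DNF step and instead freeze the truth value $t_j$ of each probabilistic atom, constraining $\mu([\alpha_j])$ to stay on the correct side of $s_j$; these two devices are interchangeable. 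Second, and more substantively, you let the variables range only over $\cpnb$-formulas \emph{realized} in the original model $M'$. The paper instead takes one variable per element of $\cpnb(A)$ and, after invoking \Fref{thm:lin_ineq_eq_thm}, asserts that every positive entry of $\myvec{y^*}$ can be associated with ``the satisfiable atom $a_k$''. That assertion is not automatic: the reduced solution need not be supported where the original solution was, and for a general base logic $\logic$ a $\cpnb$-formula can be unsatisfiable (in $\J$ with a total constant specification, for instance, any conjunction containing the conjunct $\lnot(c:\gamma)$ with $\gamma$ an axiom instance has no model). Your restriction to realized atoms guarantees a witnessing evaluation for every atom that may receive positive mass, which is exactly what the model construction needs, so on this point your argument is tighter than the paper's. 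Your remaining bookkeeping (at most $m+1 \leq |A|$ constraint rows, and clearing denominators to obtain integer coefficients of size at most $||A||$ before applying \Fref{thm:lin_ineq_eq_thm}) makes explicit steps the paper leaves implicit.
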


\begin{proof}
Let $A$ be satisfiable in some $\PL$-model.
We divide the proof in two parts: 
\begin{itemize}
\item 
we show that the satisfiability of $A$ 
implies that a linear system $\system$ is satisfiable;
\item 
we use a solution of $\system$ to define the model $M$ for $A$
that satisfies the properties
\ref{enum:smpWorlds}--\ref{enum:smpAtoms}.
\end{itemize}
\paragraph{\textbf{Finding the Satisfiable Linear System.}}
Let $R$ be some $\PL$-model. By
propositional reasoning we can show that
\begin{equation}
\label{eq:satEquiv}
R \models A \Longleftrightarrow R \models
\bigvee_{i=1}^{K} 
\bigwedge_{j=1}^{l_i} P_{\op_{ij} s_{ij}} 
\left (\alpha^{ij} \right )~.
\end{equation}
for some $K$ and $l_i$'s, such that for each
$i$ and for each $j$,
$\op_{ij} \in \{ \geq, <\}$ and $\alpha^{ij}$
is a disjunction of elements of $\cpnb(A)$.
Since $A$ is satisfiable, Eq.~\eqref{eq:satEquiv} 
implies that there exists
a $\PL$-model $M' = \langle W', H', \mu', v' \rangle$
and some $ 1 \leq i \leq K$ such that
\begin{equation}
\label{eq:SMPdisjSat}
M' \models
\bigwedge_{j=1}^{l_i} P_{\op_{ij} s_{ij}} \left ( \alpha^{ij} \right )~.
\end{equation}
Let $\cpnb(A) = \{ a_1 , \ldots , a_n \}$.
For every $k \in \{1, \ldots, n\}$ we define
\begin{equation}
\label{eq:XeqMu}
x^*_k = \mu'([a_k]_{M'})~.
\end{equation}
In every world of $M'$ some atom of $A$ must hold.
Thus, we have
\begin{equation}
\label{eq:SMPmuOne}
\mu' \left ( \bigcup^n_{k=1} [a_k]_{M'} \right ) = 1~.
\end{equation}
All the $a_k$'s belong to $\cpnb(A)$, so
for all $k$, $k' \in \{ 1, \ldots, n \}$, we have
\begin{equation}
\label{eq:SMPdisj}
k \neq k' \Longrightarrow
[a_k]_{M'} \cap [a_{k'}]_{M'} = \emptyset~.
\end{equation}
By Eqs.~\eqref{eq:XeqMu},\eqref{eq:SMPmuOne},\eqref{eq:SMPdisj} and 
the additivity of $\mu'$ we get
\begin{equation}
\label{eq:SMPsumOne}
\sum^n_{k=1} x^*_k= 1~.
\end{equation}
Let $j \in \{ 1, \ldots , l_i \}$. From Eq.~\eqref{eq:SMPdisjSat} we get
$M' \models P_{\op_{ij} s_{ij} } \big (\alpha^{ij} \big )$.
This implies that
$\mu' ( [\alpha^{ij}]_{M'})\ \op_{ij} \ s_{ij}$,
i.e.
\[
\mu' \left ( \left [\bigvee_{a_k \in \alpha^{ij}} a_k 
\right ]_{M'} \right )\  
\op_{ij} \ s_{ij},
\] 
from which we can show that
\begin{equation*}
\mu' \Bigg ( \bigcup_{ a_k \in \alpha^{ij}}
[ a_k ]_{M'} \Bigg )\ \op_{ij} \ s_{ij}~.
\end{equation*}
By Eq.~\eqref{eq:XeqMu}, \eqref{eq:SMPdisj} and the additivity of $\mu'$ we have that
\[
\sum_{a_k \in \alpha^{ij}} x^*_k \  \op_{ij} \ s_{ij}~.
\]
So we have that
\begin{equation}
\label{eq:sumOpS}
\text{for every } j \in \{ 1, \ldots , l_i\},
\sum_{a_k \in \alpha^{ij}} x^*_k \  \op_{ij} \ 
s_{ij}~.
\end{equation}

By Eqs. \eqref{eq:SMPsumOne}
and \eqref{eq:sumOpS} it is clear that the
vector $\myvec{x^*} = x^*_1, \ldots, x^*_n$ is a
non-negative solution of a linear system,
call it $\system$. By \Fref{thm:lin_ineq_eq_thm} 
we have that there exists
a vector $\myvec{y^*} = y^*_1,
\ldots, y^*_n$, with non-negative entries,
that is a solution of $\system$
and has at most $N$ entries (strictly) positive, where $0 < N \leq |A|$.
Without loss of generality we assume that
$y^*_1, \ldots, y^*_N$ are the positive entries of 
$\myvec{y^*}$.
Since every $x^*_k$ corresponds to a $\cpnb$-formula of
$A$ we can associate every positive
$y^*_k$ with the satisfiable atom $a_k$.

\paragraph{\textbf{Defining the Model $M$ for $A$.}}
The quadruple $M = \langle W, H, \mu, v \rangle$ is defined
as follows:
\begin{enumerate}[label=(\alph*), topsep = 0em]
\item
$W = \{ w_1 , \ldots, w_N\}$, for some $w_1, \ldots, w_N$;
\item
$H = \powerset (W)$;
\item
For all $V \in H$,
\[
\mu(V) =  \sum_{w_k \in V} y^*_k;
\]
\item
Let $i \in \{ 1, \ldots, N\}$.
$v_{w_i}$ is an evaluation that satisfies $a_i$.
\end{enumerate}
By using the fact that each $y^*$ is a solution of
$\system$ we can show that $M$ is a $\PL$-model. We will now prove 
the following statement:
\begin{equation}
\label{eq:wKaK}
(\forall 1 \leq k \leq N)  \left [ w_k \in \left [\alpha^{ij} \right ]_M 
\Longleftrightarrow a_k \in \alpha^{ij} \right ]~.
\end{equation}

Let $k \in \{1, ~\ldots~, ~N\}$.
We prove the two directions of Eq.~\eqref{eq:wKaK} separately.

$(\Longrightarrow)$
Assume that $w_k \in [\alpha^{ij}]$. This means that
$v_{w_k} (\alpha^{ij}) = \true$.
Assume that $a_k \notin \alpha^{ij}$. Then, since
$\alpha^{ij}$ is a disjunction of some
atoms of $A$, there must
exist some $a_{k'} \in \alpha^{ij}$, with $k \neq k'$, such that
$v_{w_k} (a_{k'}) = \true$. 
However, by definition we have that $v_{w_k} (a_k) = \true$.
But this is a contradiction, since
$a_k$ and $a_{k'}$ are different
atoms of the same formula, which means
that they cannot be satisfied in the same
evaluation. Hence, $a_k \in \alpha^{ij}$.

$(\Longleftarrow)$
Assume that $a_k \in \alpha^{ij}$. We know that $v_{w_k}(a_k) = \true$, which implies that
\[
v_{w_k} (\alpha^{ij}) = \true, \text{ i.e. } w_k
\in \left [\alpha^{ij} \right ]_M~.
\]

Hence, Eq.~\eqref{eq:wKaK} holds.
Now, we will prove the following statement:
\begin{equation}
\label{eq:SMPsat}
\big ( \forall 1 \leq j \leq l_i \big ) \big [
M \models P_{\op_{ij} s_{ij}} \alpha^{ij} \big ]~.
\end{equation}

Let $j \in \{ 1, \ldots, l_i \}$.
It holds
\begin{align*}
M & \models P_{\op_{ij} s_{ij}} (\alpha^{ij})
& \Longleftrightarrow\\
\mu ([& \alpha^{ij}]_M) ~\op_{ij}~ s_{ij}
& \Longleftrightarrow\\
\sum_{w_k \in [\alpha^{ij}]_{M}}& y^*_k ~\op_{ij}~ s_{ij} & 
\stackrel{\text{Eq.~\eqref{eq:wKaK}}}{\Longleftrightarrow} \\
\sum_{a_k \in \alpha^{ij}}& y^*_k ~\op_{ij}~ s_{ij}~.
\end{align*}
The last statement holds
because $\myvec{y^*}$ is a solution of $\system$. 
Thus, Eq.~\eqref{eq:SMPsat} holds.

By Eq.~\eqref{eq:SMPsat} we have that $M \models 
\bigwedge^{l_i}_{j=1} P_{\op_{ij} s_{ij}} (\alpha^{ij})$, which 
implies that
\[
M \models 
\bigvee_{i=1}^{K}\bigwedge^{l_i}_{j=1} P_{\op_{ij} s_{ij}}
(\alpha^{ij}),
\]
which, by Eq.~\eqref{eq:satEquiv}, implies that $M \models A$.

So, we have that each $w_i$ corresponds to
one satisfiable $a_i$ and also that $\mu(\{w_i\}) = y^*_i$.
Since the number of positive $y^*_i$'s is at most $|A|$ and the size
of every positive $y^*_i$ is at most $2 \cdot (|A| \cdot ||A|| + |A| \log_2(|A|)+1)$,
we have that $M$ is, indeed, the model in question.
\end{proof}

The small model property
shows that the formula is satisfied in a structure with small number of
worlds, small probabilities assigned to each world and that in every world of
this structure a unique $\cpnb$-formula holds. The following Lemma shows
these $\cpnb$-formulas that hold in the worlds practically define
evaluations for the formula that is tested for satisfiability.

\begin{lemma}
\label{lem:atomEval}
Let $\alpha \in \lan$, let $v_1, v_2$ be two evaluations
and assume that, for every
basic formula $\beta$ that appears in $\alpha$
\[
v_1(\beta) = v_2(\beta)~.
\] 
Then we have
\[
\ext{v}_1 (\alpha) = \ext{v}_2 (\alpha)~.
\]
\end{lemma}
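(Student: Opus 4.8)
The plan is to proceed by structural induction on $\alpha$, following the grammar $\alpha ::= b \mid \lnot \alpha \mid \alpha \land \alpha$ that defines $\lan$. The key auxiliary observation I would record first is that the set of basic formulas occurring in any subformula of $\alpha$ is contained in the set of basic formulas occurring in $\alpha$ itself. This guarantees that the hypothesis ``$v_1$ and $v_2$ agree on every basic formula appearing in $\alpha$'' is inherited by every immediate subformula, so that the induction hypothesis will be applicable at each step.

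For the base case, suppose $\alpha = b$ for some $b \in \basis(\lan)$. Then $b$ is itself a basic formula appearing in $\alpha$, so the hypothesis gives $v_1(b) = v_2(b)$; since $\ext{v}$ agrees with $v$ on basic formulas, we obtain $\ext{v}_1(b) = \ext{v}_2(b)$ immediately.

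For the inductive steps I would exploit that $\ext{v}$ is defined classically on the connectives $\lnot$ and $\land$, with clauses that are literally the same function of the truth values of the immediate subformulas regardless of whether we start from $v_1$ or $v_2$. If $\alpha = \lnot \beta$, then $\beta$ contains exactly the same basic formulas as $\alpha$, so the hypothesis descends to $\beta$ and the induction hypothesis yields $\ext{v}_1(\beta) = \ext{v}_2(\beta)$; applying the negation clause gives $\ext{v}_1(\lnot \beta) = \ext{v}_2(\lnot \beta)$. If $\alpha = \beta \land \gamma$, then the basic formulas appearing in $\alpha$ are precisely those appearing in $\beta$ together with those appearing in $\gamma$, so the hypothesis holds for each conjunct; the induction hypothesis then gives $\ext{v}_1(\beta) = \ext{v}_2(\beta)$ and $\ext{v}_1(\gamma) = \ext{v}_2(\gamma)$, and the conjunction clause forces $\ext{v}_1(\beta \land \gamma) = \ext{v}_2(\beta \land \gamma)$.

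There is no genuine obstacle in this argument: the only point requiring care is the bookkeeping in the auxiliary observation about occurrences of basic formulas, which is what lets the hypothesis descend to subformulas so that the induction hypothesis applies. Everything else is an immediate consequence of the fact that the definition of $\ext{v}$ on the classical connectives does not depend on which underlying evaluation we chose.
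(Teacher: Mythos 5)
Your proof is correct and is precisely the routine argument the paper has in mind: the paper states this lemma without any proof at all, treating it as an immediate structural induction on the grammar $\alpha ::= b \mid \lnot\alpha \mid \alpha \land \alpha$. Your write-up, including the bookkeeping observation that the basic formulas occurring in a subformula are among those occurring in $\alpha$ (which is what lets the hypothesis descend), faithfully supplies the argument the paper leaves implicit.
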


\subsection{Complexity Bounds}

\label{subsec:non-iter-compl}

In this subsection we obtain the conditional upper bound for $\PLSat$.
The upper bound follows from the fact that for a given $\PL$-formula $A$,
we can guess a small model for it and then verify that this
model indeed satisfies $A$. 

As a first step we need
the following Lemma which can be proved by an easy
induction on the complexity of the formula.

\begin{lemma}
\label{lem:atomSat}
Let $\alpha \in \lan$ and let $a \in 
\cpnb(\alpha)$. Let $v$ be an evaluation and
assume that $\ext{v}(a) = \true$. The decision
problem
\begin{center}
does $\ext{v}$ satisfy $\alpha$?
\end{center}
belongs to the complexity class $\mathsf{P}$.
\end{lemma}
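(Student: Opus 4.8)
The plan is to proceed by structural induction on $\alpha$ (i.e.\ on the complexity of the formula), showing that the truth value $\ext{v}(\alpha)$ can be computed in time polynomial in $|\alpha|$. The crucial preliminary observation is that, although $v$ is a function on the infinite set $\basis(\lan)$, the hypothesis $\ext{v}(a) = \true$ for some $a \in \cpnb(\alpha)$ fixes the value of $v$ on precisely the basic subformulas of $\alpha$: for each $B \in \subf(\alpha) \cap \basis(\lan)$, the conjunction $a$ contains exactly one of $B$ or $\lnot B$, and since $\ext{v}(a) = \true$ we have $v(B) = \true$ iff $B$ (rather than $\lnot B$) is the conjunct of $a$. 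Hence from $a$ we can read off $v(B)$ for every basic subformula $B$ by matching $B$ against the conjuncts of $a$, in time polynomial in $|a| \le |\alpha|$.

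With these values in hand, the algorithm is a single bottom-up pass over the parse tree of $\alpha$ with respect to $\lnot$ and $\land$. Formally the induction runs as follows. In the base case $\alpha = b \in \basis(\lan)$, the formula $b$ is itself a basic subformula, so $\ext{v}(b) = v(b)$ is determined by $a$ as above. For $\alpha = \lnot \beta$, I would apply the inductive hypothesis with the sub-conjunction of $a$ over $\subf(\beta) \cap \basis(\lan)$; since $\subf(\beta) \subseteq \subf(\alpha)$ this restriction is again an element of $\cpnb(\beta)$ satisfied by $\ext{v}$, so the hypothesis yields $\ext{v}(\beta)$ in polynomial time, and then $\ext{v}(\alpha) = \true$ iff $\ext{v}(\beta) = \false$. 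For $\alpha = \beta_1 \land \beta_2$ the hypothesis likewise yields both $\ext{v}(\beta_1)$ and $\ext{v}(\beta_2)$, and $\ext{v}(\alpha) = \true$ iff both are $\true$. Each step adds only constant work on top of the recursive calls, so the total running time is linear in the number of nodes of the parse tree, hence polynomial in $|\alpha|$.

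The only genuinely delicate point is the very first one: one must check that the input to the decision problem is finitely representable and that $v$ enters only through the finite certificate $a$. This is exactly what \Fref{lem:atomEval} guarantees --- the value $\ext{v}(\alpha)$ depends only on the values of $v$ on the basic subformulas of $\alpha$ --- so no information about $v$ beyond what $a$ encodes is ever needed, and the fact that the basic formulas may carry heavy internal structure (as in justification logic) is irrelevant, since that structure is treated as opaque by the propositional skeleton. Once this is observed, the remainder is the routine claim that evaluating a Boolean combination over $\lnot$ and $\land$ of leaves with known truth values is a $\mathsf{P}$-computation, and the structural induction above makes this precise.
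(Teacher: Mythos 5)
Your proof is correct and takes exactly the approach the paper intends: the paper offers no detailed argument, remarking only that the lemma ``can be proved by an easy induction on the complexity of the formula,'' and your structural induction --- reading off $v$ on the basic subformulas from the conjuncts of $a$, then evaluating the $\lnot$/$\land$ skeleton bottom-up in polynomial time --- is precisely that argument spelled out. Your appeal to \Fref{lem:atomEval} to justify that $v$ enters only through the finite certificate $a$ is also consistent with how the paper uses that lemma in the surrounding complexity arguments.
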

Now we are ready to prove the upper complexity
bound for $\PLSat$.
\begin{theorem}
\label{thm:upper_bound_pl}
Assume that the satisfiability problem for
$\cpnb$-formulas in the logic $\logic$
belongs to the complexity class $\cclass$.
Then $\PLSat \in \np^{\cclass}$.
\end{theorem}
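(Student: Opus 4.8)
The plan is to turn \Fref{thm:smp_PL} into a nondeterministic guess-and-verify procedure running in polynomial time with access to a $\cclass$-oracle. The point is that \Fref{thm:smp_PL} restricts attention, for a satisfiable $A \in \lanPL'$, to models with at most $|A|$ worlds, with single-world probabilities that are rationals of size polynomial in $|A|$, and in which each world is pinned down by one $\cpnb$-formula. Such a model can therefore be described by polynomially many bits, so it can be guessed; the only genuinely ``hard'' information about the base logic $\logic$ will be confined to a handful of oracle queries.

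First I would set up the guessing phase. The machine nondeterministically picks $N \leq |A|$ and, for each world $w_i$ with $1 \leq i \leq N$, guesses a $\cpnb$-formula $a_i \in \cpnb(A)$ --- recorded by choosing, for each of the at most $|A|$ elements of $\subf(A) \cap \basis(\lan)$, whether it appears positively or negatively --- together with a non-negative rational weight $y_i$ whose size obeys the bound of \Fref{thm:smp_PL}. Each of these objects has size polynomial in $|A|$, so the entire guess is of polynomial size.

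Next I would carry out the verification phase, which must be deterministic polynomial time modulo oracle calls, and which consists of three checks. (i) Each $a_i$ must be realizable by an actual evaluation, i.e.\ it must be $\logic$-satisfiable; this is precisely the $\cpnb$-satisfiability problem, so it is decided by one $\cclass$-oracle query per world. (ii) The weights must form a measure, i.e.\ $\sum_{i=1}^{N} y_i = 1$, a polynomial arithmetic check (measurability is free since $H = \powerset(W)$). (iii) The induced model must satisfy $A$. For (iii) observe that, once $a_i$ is known to be satisfiable, it fixes the truth value of every basic subformula of $A$ and hence, by \Fref{lem:atomEval}, of every $\lan$-subformula $\alpha$ occurring in $A$; by \Fref{lem:atomSat} that truth value is computable in polynomial time. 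Thus for each subformula $P_{\geq s}\alpha$ of $A$ I can compute $[\alpha]_M = \{ w_i \mid \ext{v}_{w_i}(\alpha) = \true \}$, evaluate $\mu([\alpha]_M) = \sum_{w_i \in [\alpha]_M} y_i$, test the constraint, and finally Boolean-evaluate $A$ to decide whether $M \models A$.

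It then remains to check soundness and completeness. Completeness is immediate from \Fref{thm:smp_PL}: a satisfiable $A$ has a small model that the guessing phase can reproduce and the verification phase accepts. Soundness holds because any accepting computation produces, thanks to the oracle's confirmation that every $a_i$ is realizable, a genuine $\PL$-model of $A$. The degenerate case $A \in \lan$ is handled in the same spirit but more cheaply: guess a single $\cpnb$-formula witnessing satisfiability and verify it with the oracle and \Fref{lem:atomSat}. Putting everything together gives $\PLSat \in \np^{\cclass}$. I expect the main obstacle to be conceptual rather than computational --- namely realizing that one should guess only the symbolic $\cpnb$-labels of the worlds rather than full evaluations (which could be arbitrarily complex objects in a general $\logic$), so that the sole entry point for the difficulty of $\logic$ is the realizability test, exactly the $\cclass$-bounded problem in the hypothesis.
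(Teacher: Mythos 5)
Your proposal is correct and follows essentially the same route as the paper's own proof: guess the small model guaranteed by \Fref{thm:smp_PL} (the $\cpnb$-labels of the worlds plus polynomial-size rational weights), use the $\cclass$-oracle only for realizability of each label, and verify $M \models A$ via Lemmas \ref{lem:atomEval} and \ref{lem:atomSat}, handling $A \in \lan$ separately. Your explicit check that the weights sum to $1$ and your separated soundness/completeness discussion are slightly more careful than the paper's write-up, but they do not constitute a different argument.
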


\begin{proof}
Let $A \in \lanPL$ and let $\algo$ be the
$\cclass$-algorithm that can test $\cpnb$-formulas
in $\logic$ for satisfiability. If $A \in \lan$ then $A$ is a equivalent to a
disjunction of $\cpnb$-formulas (i.e. $A$ can be seen as a formula in
disjunctive normal form). So, we guess one of these formulas and
using $\algo$ verify that it is satisfiable in polynomial time using
\Fref{lem:atomSat}. Of course this can be done
in nondeterministic polynomial time using a $\cclass$-oracle.

For the rest of the proof we assume that $A \in \lanPL'$, i.e. that $A$ contains
probabilistic operators.
A non-deterministic algorithm that tests $A$ for
satisfiability can simply guess a model for
$A$ that satisfies
conditions~\ref{enum:smpWorlds}--\ref{enum:smpAtoms}
that appear in the statement of \Fref{thm:smp_PL}.
We present a non-deterministic algorithm
that performs this guess and we evaluate its
complexity.

\paragraph{\textbf{Algorithm.}} We guess $n$ elements of $\cpnb(A)$,  call them
$a_1, \ldots , a_n$, and we also choose
$n$ worlds, $w_1, \ldots, w_n$, for $n \leq |A|$.
Using $\algo$ we can verify that for each
$i \in \{ 1, \ldots, n\}$ there exists an
evaluation  $\ext{v}_i$ such that $\ext{v}_i(a_i) = \true$. 
We define $W = \{ w_1 , \ldots, w_n\}$ and
for every $i \in \{ 1, \ldots, n\}$ we set $v_{w_i} = v_i$. Since we are only interested in
the satisfiability of basic
formulas that appear in
$A$, by \Fref{lem:atomEval}, the choice of the $v_{w_i}$
is not important (as long as $v_{w_i}$ satisfies
$a_i$).
We assign to every $\mu(\{w_i\})$ a rational number
with size at most
\[
2 \cdot \big ( |A| \cdot ||A|| + |A| \cdot \log_2 
(|A|) + 1 \big )~.
\]
We set $H = \powerset(W)$ and
for every $V \in H$ we set
\[
\mu(V) = \sum_{w_i \in V} \mu (\{w_i\})~.
\]
It is then straightforward to see that
conditions~\ref{enum:smpWorlds}--\ref{enum:smpAtoms}
that appear in the statement of \Fref{thm:smp_PL} hold.

Now we have to verify that our guess is correct, i.e. that
$M \models A$.
Assume that $P_{\geq s} \alpha$ appears in $A$.
In order to see whether $P_{\geq s} \alpha$ holds
we need to calculate the measure of the set $[\alpha]_M$
in the model $M$. The set $[\alpha]_M$ will contain
every $w_i \in W$ such that $v_{w_i}(\alpha) =
\true$.
Since $v_{w_i}$ satisfies an atom of $A$ it also
satisfies an atom of $\alpha$. So, by 
\Fref{lem:atomSat}, we can check whether $*_{w_i}$
satisfies $\alpha$ in polynomial time.
If $\sum_{w_i \in [\alpha]_M} 
\mu(\{w_i\}) \geq s$ then we replace $P_{\geq s} \alpha$
in $A$
with the truth value $\true$, otherwise with
the truth value $\false$.
We repeat the above procedure for every formula
of the form $P_{\geq s} \alpha$ that 
appears in $A$. At the end we have a formula that is
constructed only from the connectives $\lnot$, $\land$
and the truth constants $\true$ and $\false$. Obviously,
we can verify in polynomial
time that the formula is true.
This, of course, implies that $M \models A$.

\paragraph{\textbf{Complexity Evaluation.}}
All the objects that are guessed in our
algorithm have size that is polynomial in the size
of $A$.
Also the verification phase of our algorithm can
be made in polynomial time. Furthermore
checking whether an element of
$\cpnb(\alpha)$ is satisfiable 
is possible with a $\cclass$-oracle.
Thus, our $\PLSat$ belongs to the class
${\np}^{\cclass}$.  
\end{proof}

\section{Iterated Probabilistic Logics}

\label{sec:iter}

\subsection{Semantics}

The iterated probabilistic logic over $\logic$
is the logic $\PPL$ (the two $\mathsf{P}$'s stand for 
the iterations of the probability operator).
The language of $\PPL$,
$\lanPPL$, is defined by adding nested probabilistic
operators to the language $\lan$.
Formally, $\lanPPL$ is defined by the
following grammar:
\[
A ::= b ~|~ \lnot A ~|~ A \land A ~|~ 
P_{\geq s} A~,
\]
where $s \in \Rat \cap [0,1]$ and $b \in \basis(\lanPPL) \supseteq \basis(\lan)$.
We will use upper-case latin letters
like $A, B, C, \ldots$ for members of $\lanPPL$.

Models for $\PPL$ are probability
spaces where the worlds contain 
evaluations and probability
spaces (so that we can deal with iterated probabilities).
Formally, we have

\begin{definition}[$\PPL$-Model and Truth in a $\PPL$-Model]
Assume that $M = \langle U, W, H, \mu , v \rangle$ where:
\begin{enumerate}
\item
$U$ is a non-empty set of objects called worlds;
\item
for every $w \in U$,
\begin{center}
$\langle W_w, H_w, \mu_w\rangle$ is
a probability space with $W_w \subseteq U$
and $v_w$ is an evaluation.
\end{center}
\end{enumerate}
Truth in $M$ is defined as follows (the connectives $\lnot$ and
$\land$ are treated classically):
\begin{align*}
M, w \models A &\quad\Longleftrightarrow\quad v_w (A) = \true \quad\text{for $A \in \basis(\lanPPL)$}~;\\
M, w \models P_{\geq s} A &\quad \Longleftrightarrow\quad
\mu_w \big ( [A]_{M,w} \big ) \geq s.
\end{align*}
$M$ is a $\PPL$-model if for every $A \in \lanPPL$ and every $w \in U$, $[A]_{M,w} \in H_w$, where
\[
[A]_{M,w} = \{ u \in W_w ~|~ M,u \models A \}.
\]
\end{definition}

We observe that, as in the non-iterated case, in order to formally define the semantics
of an iterated probabilistic logic  it suffices to define the basic formulas of $\PPL$ and how
the evaluation behaves on them.

\subsection{Complexity Bounds}

In this section we obtain complexity bounds for $\PPLSat$.
The upper bound
is obtained via a tableaux procedure, which
resembles the tableaux procedure for modal 
logic~\cite{halmos92}.
The idea for obtaining this upper bound
for a modal logic that contains probabilistic
operators similar to ours was sketched
in \cite[Theorem 4.5]{faginH94}, but no complete formal proof was given there.
The lower bound is obtained by drawing a reduction from
modal logic $\D$~\cite{halmos92}.

\subsubsection{The Upper Bound}

As a first step we need the following definition:

\begin{definition}[Conjunctions of Positive
and Negative Formulas]
For
$A_1,$ $\ldots,$ $A_n \in \lanPPL$, we define the following set:
\[
\cpnf(A_1, \ldots, A_n) = \left \{ a ~ \Bigg |~ a\text{ is of the form }\bigwedge^n_{i=1} \pm A_i \right \} ~.
\]
The acronym $\cpnf$ stands for 
$\mathsf{c}$onjunction of $\mathsf{p}$ositive and
$\mathsf{n}$egative $\mathsf{f}$ormulas. 
As for the $\cpnb$-formulas
we will use the possibly primed or subscripted
lower-case Latin letter $a$ for
$\cpnf$-formulas.
If $a \in \cpnf(A_1, \ldots, A_n)$ for some 
$A_1, \ldots, A_n$ and
there is no danger of confusion, we may say
that $a$ is a $\cpnf$-formula.
\end{definition}

We now present the announced tableaux method.
Our tableaux are trees where the nodes  are
formulas prefixed with world and truth signs.
So, the node $w~\true~A$ ($w~\false~A$) intuitively
means that formula $A$ is true 
(respectively false) at world $w$ of some model.
The root of a tableau contains the formula that is
tested for satisfiability. The tableaux rules
are presented in \Fref{tab:tableaux}.
\begin{table}
{
\centering
\renewcommand*{\arraystretch}{3}
\begin{tabular}{|c|}
\hline
\AxiomC{$w~\true~\lnot A$}
\myLL{\notT}
\UnaryInfC{$w~\false~A$}
\DisplayProof
\hspace{0.5em}
\AxiomC{$w~\false~\lnot A$}
\myLL{\notF}
\UnaryInfC{$w~\true~A$}
\DisplayProof
\hspace{0.5em}
\AxiomC{$w~\true~A \land B$}
\myLL{\andT}
\UnaryInfC{$w~\true~A$}
\noLine
\UnaryInfC{$w~\true~B$}
\DisplayProof
\hspace{0.5em}
\AxiomC{$w ~\false~A \land B$}
\myLL{\andF}
\UnaryInfC{$w ~\false~A~|~w ~\false~B$}
\DisplayProof\\
\AxiomC{$p_{w_{ij}}$}
\myLL{\tabProb}
\UnaryInfC{$w.1 ~\true~ a_1 ~|~ \cdots ~|~ w.n ~\true~ a_n $}
\DisplayProof \\\hline
\end{tabular}
\caption{\label{tab:tableaux} The Tableaux Rules}
}
\end{table}
The first line
consists of the propositional rules and
the second line of the probabilistic rule
$\tabProb$. A separator, i.e. the symbol
``$|$'', in the result of the
rule means that the formulas in the conclusion
belong to distinct branches. So, only
the rules $\tabProb$ and $\andF$ create new
branches; the other rules simply add formulas to the
branch where the premise belongs. 

Every propositional rule gives simpler conditions for satisfiability: if the premise
is satisfiable then at least one of the results has to be satisfiable too.
The function of rule $\tabProb$ is more complicated and requires some explanation.
Rule $\tabProb$ is the only rule that
creates new worlds. So, formulas that belong to a path
between two applications of rule $\tabProb$ are marked
with the same world. Therefore, we can define the notion
of a \emph{world path}. A world path is a shortest path in a
tableau that starts either from the root or from
a result of an application of the rule $\tabProb$ and
ends either in a leaf or at a premise of an
application of the rule $\tabProb$. We assume that the root, each one of the leaves and
each application of rule $\tabProb$ are marked with unique natural
numbers. In the following we will refer to the numbers assigned to
the root, the leaves, or the applications of rule $\tabProb$ as points.
Due to the fact that all nodes in a world path are marked with the
same world and lay between two points, we can represent a world path as $p_{w_{ij}}$ where $w$
is the world prefix of the formulas in the world path and $i,j$ are the points.
If formula $A$ appears in world path $p_{w_{ij}}$ (prefixed either with $\true$ or with $\false$) 
we write $A \in p_{w_{ij}}$. Now, the function of rule $\tabProb$ in 
\Fref{tab:tableaux} can be explained in full 
detail: it is applied in the 
world path $p_{w_{ij}}$ and it creates several new
branches. Each
of these branches is marked with a new
world symbol $w.k$ ($1 \leq k \leq n$). Assume that
\[
\{ B_1, \ldots, B_m\} = \left \{ B ~|~ P_{\geq s} B
\in p_{w_{ij}} \right \}~.
\]
Then the formulas $a_l$ appearing in the 
result of rule $\tabProb$ are defined as follows:
\[
\{ a_1, \ldots, a_n \} = \cpnf(B_1, \ldots, B_m)~.
\]
It is not difficult to see that each $B_k$ is equivalent to a
disjunction of some $a_l$'s. Hence the world path $p_{w_{ij}}$ imposes
probabilistic conditions of the form $P_{\geq s} \bigvee_l a_l$ or
$P_{< s} \bigvee_l a_l$, which using the fact that the $a_l$'s cannot
hold in the same world, translate to a condition like
``the sum of the measures of some $a_l$'s is at least or less than $s$''.
So, even if in every result of the rule $\tabProb$ a different $\cpnf$-formula has
to hold, each of these formulas will be assigned a (possibly) different
probability (which could be $0$)
and these probabilities will have to satisfy some linear conditions imposed
by the formulas in $p_{w_{ij}}$. Thus, while in the propositional rules we have
the property ``if the premise holds, at least one of the results has to hold'', in rule
$\tabProb$ we have ``if the premise holds (i.e. if the conjunction of all
the formulas that appear in the world path of the premise hold) then each one of the results
have to hold with (possibly) different probability and some sums of these probabilites have to satisfy
some linear conditions''.

A world path is called $P$-open if there is
some $P_{\geq s} B \in p_{w_{ij}}$. Otherwise it is
called $P$-closed. Let $p_{w_{ij}}$ be a $P$-open
world path that ends in the application
of rule $\tabProb$ $j$.
All the world paths that start from a
result of $j$ are called the \emph{children} of $p_{w_{ij}}$.
To make the presentation simpler we will
use the nodes with the subscription of points as expressions in the metalanguage. 
So, we simply
write ``$w_{ij}~\true~A$'' instead of the phrase ``the node $w~\true~A$
appears in the tableau in the world path between the points $i$ and $j$''.
Observe that the simpler notation $w~\true~A$ is ambiguous.
The reason is that the application of rule $\andF$ creates two branches, where the formulas
are prefixed with the same world sign. This means that the same formula
may occur with the same world sign in different places in the tableau tree.
We also have to point out
that the points are only
necessary for defining the worlds of the model that will be obtained from the tableaux. 
Even if some points appear
in the premise of rule $\tabProb$, they play no role for constructing the tableaux. 

For every $p_{w_{ij}}$ we define  the following $\PPL$-formulas:
\begin{align*}
F_{p_{w_{ij}}} & = \bigwedge_{w_{ij}~\true~C} C \land
\bigwedge_{w_{ij}~\false~C} \lnot C~;\\
B_{p_{w_{ij}}} & = \bigwedge_{w_{ij}~\true~C, C \in \basis(\lanPPL)} 
C \land \bigwedge_{w_{ij}~\false~C, C \in \basis(\lanPPL)} 
\lnot C~.
\end{align*}
The intuition behind these definitions is that $F_{p_{w_{ij}}}$ is a conjunction of
all the positive and negative formulas (hence the letter $F$) that apppear in a world path, whereas
$B_{p_{w_{ij}}}$ is a conjunction of all the positive and negative basic formulas (hence
the letter $B$) that appear in a world path. Observe that $B_{p_{w_{ij}}}$ is always
a $\cpnb$-formula. Also if $F_{p_{w_{ij}}}$ is $\PPL$-satisfiable
this implies that there is an evaluation that
satisfies $B_{p_{w_{ij}}}$. Let $p_{w_{ij}}$ be a premise of an
application of rule $\tabProb$. When the conjunction
of formulas in  $p_{w_{ij}}$ hold (i.e. when formula $F_{p_{w_{ij}}}$ holds) some
probabilistic conditions are imposed (which are translated to linear conditions in
the probabilities assigned to the children of $p_{w_{ij}}$) and also some non-probabilistic
conditions have to hold (which are translated to satisfiability conditions imposed
to basic formulas appearing in $p_{w_{ij}}$, i.e. to the formula $B_{p_{w_{ij}}}$).

The tableau for some $A \in \lanPPL$ is a tree
that is created as follows:

\begin{enumerate}
\item
Create the node $w~\true~A$ (this is the root of the tableau). Go to step \ref{enum:create_points}.
\item
\label{enum:create_points}
Assign to the root, each one of the leaves and every application of the rule $\tabProb$
a unique natural number. Go to step \ref{enum:tab_rule_prop}.
\item
\label{enum:tab_rule_prop}
Apply the propositional rules for
as long as possible.
If there exists a $P$-open
world path, go to
step \ref{enum:tab_rule_prob}.
Otherwise stop.
\item
\label{enum:tab_rule_prob}
Apply the rule $\tabProb$ to every 
$P$-open world path. Go to step~\ref{enum:tab_rule_prop}.
\end{enumerate}

Every time a tableaux rule is applied,
at least one operator ($\lnot$, $\land$ or $P_{\geq s}$)
is eliminated. This implies that the tableau for $A$ is a
finite tree. 

The goal of the tableaux procedure is to create a
model for some formula, if such a model exists. It is important for the reader to keep in
mind that the worlds of this model will not be the worlds assigned to each node, i.e. the
$w$'s. The actual worlds of the model will be the $w$'s, subscripted with points, i.e. the $w_{ij}$'s. 
So, it might help the reader to think of the $w$'s as pre-wordls, which can be instantiated
to several actual worlds, i.e. $w_{ij}$'s.

Now we are ready to prove the main theorem of this section.

\begin{theorem}
\label{thm:upper_bound_ppl}
Assume that the satisfiability problem for
$\cpnb$-formulas in the logic $\PPL$ belongs
to the complexity class $\cclass$.
Then $\PPLSat \in \pspace^{\cclass}$.
\end{theorem}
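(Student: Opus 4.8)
The target is to show $\PPLSat \in \pspace^{\cclass}$, where $\cclass$ is an oracle deciding satisfiability of $\cpnb$-formulas in $\PPL$. The natural strategy, given the machinery already set up, is to design a recursive (depth-first) tableaux-search procedure that explores the tableau tree described above, using polynomial working space at each level and recursing through the tree depth. Since $\pspace = \npspace$ by Savitch, and alternating polynomial time equals $\pspace$, I would phrase the algorithm as an alternating procedure or, more concretely, as a recursive procedure whose recursion depth and local workspace are both polynomial in $|A|$.

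\textbf{The recursive procedure.} The plan is as follows. Given a $\PPL$-formula $A$, place $w~\true~A$ at the root and apply the propositional rules exhaustively. The propositional rules ($\notT$, $\notF$, $\andT$, $\andF$) only ever add formulas within a single world path or split $\andF$ into two branches; since each application removes a connective, saturating one world path under these rules costs only polynomial space, and the disjunctive branching (from $\andF$ and, later, from the choice of which $\tabProb$-branch to pursue) is handled by nondeterministic/existential guessing. After saturation, for each $P$-open world path $p_{w_{ij}}$ I collect the set $\{B_1,\dots,B_m\} = \{B \mid P_{\geq s}B \in p_{w_{ij}}\}$, form the $\cpnf$-formulas $a_1,\dots,a_n$, and must solve the linear feasibility problem: assign nonnegative probabilities $\mu(a_l)$ summing to $1$ so that every probabilistic constraint $P_{\op s}\bigvee_l a_l$ coming from the world path is met. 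By \Fref{thm:lin_ineq_eq_thm} this system, if feasible, has a solution with at most $r$ positive entries and each entry of size polynomial in the input, where $r$ is the number of constraints (polynomial in $|A|$). So I existentially guess which $a_l$'s receive positive probability (at most polynomially many), guess the small rational probabilities, and verify the linear constraints in polynomial time; each chosen $a_l$ becomes the root of a \emph{child} world, whose satisfiability is checked by recursing.

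\textbf{Space accounting and the oracle.} The key point making this $\pspace$ rather than something larger is that the recursion \emph{depth} is bounded by the modal nesting depth of $P_{\geq s}$ operators in $A$, hence by $|A|$, and at each level I need only store: the current world path (polynomial size), the guessed support and probabilities (polynomial size), and bookkeeping for which branch I am exploring. Reusing space across sibling branches in the depth-first traversal keeps the total working space polynomial. The oracle $\cclass$ enters precisely at the leaves: for each $P$-closed world path, and for the basic-formula content $B_{p_{w_{ij}}}$ of each world path, I must check that the associated $\cpnb$-formula is satisfiable in $\logic$ (i.e. that there is an evaluation realizing it together with the recursively-established truth values of its probabilistic subformulas); this is one $\cclass$-query of polynomial-size input. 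Combining with the recursion, I obtain a deterministic (after eliminating the nondeterministic guesses via $\npspace = \pspace$) polynomial-space procedure with a $\cclass$-oracle, giving $\PPLSat \in \pspace^{\cclass}$.

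\textbf{Main obstacle.} I expect the hard part to be proving \emph{correctness} of the procedure — specifically, the soundness-and-completeness argument that an open, consistent tableau (one surviving all propositional closures, all linear-feasibility checks, and all oracle queries) yields an actual $\PPL$-model, and conversely. The delicate direction is model construction: I must show that the worlds $w_{ij}$, with the guessed finitely-additive measures on the $\cpnf$-formulas realized at their children, patch together into a genuine $\PPL$-model satisfying $A$ at the root. This requires verifying that each $B_k$ really is equivalent to the disjunction of the corresponding $a_l$'s, that the linear constraints faithfully encode $\mu_w([A]_{M,w})\,\op\,s$, and that the recursive truth assignments are mutually consistent across nesting levels — essentially an induction on modal depth mirroring \Fref{thm:smp_PL}, but now threaded through the tree rather than applied at a single level. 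Managing the interaction between the probabilistic linear-programming step at each node and the combinatorial tableau branching, while keeping every estimate within the polynomial space budget, is where the real work lies.
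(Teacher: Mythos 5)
Your proposal takes essentially the same route as the paper's proof: a depth-first traversal of the same tableau, using \Fref{thm:lin_ineq_eq_thm} to justify guessing a support of at most $|A|$ children with polynomial-size rational probabilities at each $P$-open world path, querying the $\cclass$-oracle on the $\cpnb$-formulas $B_{p_{w_{ij}}}$, and reusing space across sibling branches so that the total space is the (polynomial) tree depth times the (polynomial) per-level workspace. The correctness argument you identify as the main obstacle is precisely what the paper's proof carries out in full, establishing by induction that a world path $p_{w_{ij}}$ is marked realizable if and only if $F_{p_{w_{ij}}}$ is $\PPL$-satisfiable, with an explicit model construction for one direction and \Fref{thm:lin_ineq_eq_thm} for the other.
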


\begin{proof}
Let $A$ be the $\lanPPL$-formula that we want to 
test for satisfiability. Let $\algo$
be the $\cclass$-algorithm that decides the satisfiability
problem for $\cpnb$-formulas in logic $\PPL$.
We present an algorithm that decides whether
$A$ is satisfiable by traversing the tableau for
$A$ in a depth first fashion. Then, we prove the correctness
of the algorithm and analyse its complexity.

\paragraph{\textbf{Algorithm}.}

The goal of the algorithm is to traverse
the tableau for $A$ and decide which
world paths should be marked \emph{realizable}.
A realizable world path $p_{w_{ij}}$ contains all 
the formulas that are satisfied in world $w_{ij}$
of the model for $A$ (if our algorithm decides 
that such a model exists). On the other
hand, a world path that is not marked realizable
implies that the formulas in this path cannot be
satisfied in a $\PPL$-model. We execute the
following steps:

\begin{enumerate}
\item
\label{enum:choose_p_w}
If all the world paths have been examined then stop.
Otherwise, let $p_{w_{ij}}$ be the next (in depth first fashion) world path. Go to step
\ref{enum:p_closed}.
\item 
\label{enum:p_closed}
Mark $p_{w_{ij}}$ examined.
If $\algo$ fails in $B_{p_{w_{ij}}}$, do not mark it
realizable and go to step \ref{enum:choose_p_w}, else
if $p_{w_{ij}}$ is $P$-closed, mark $p_{w_{ij}}$
realizable and go to step \ref{enum:choose_p_w}, else
go to step \ref{enum:p_open}.
\item 
\label{enum:p_open}
Recall that by entering this step, we have selected an open world path $p_{w_{ij}}$,
such that $\algo$ succeeds in $B_{p_{w_{ij}}}$. Now we proceed as follows:
 elect at most $|A|$ rational
numbers (not necessarily different from each other) 
of size at most $2 \cdot \big ( |A| \cdot || A || + |A| \cdot \log_2 (|A|) + 1\big )$ 
from the interval $(0,1]$ and
assign each one of them to a child of $p_{w_{ij}}$.
Assign the number $0$ to the rest of $p_{w_{ij}}$'s children. 
Now for each $w~\true ~ P_{\geq s}B$ that appears in $p_{w_{ij}}$
we run the procedure $\verb|prob_test_pos|(P_{\geq s}B, p_{w_{ij}})$.
And for each $w ~ \false~P_{\geq s} B$ that appears in $p_{w_{ij}}$
we run the procedure $\verb|prob_test_neg|(P_{\geq s} B, p_{w_{ij}})$.
These procedures are defined as follows:

\begin{flushleft}
$\verb|prob_test_pos|(P_{\geq s}B, p_{w_{ij}})$\\
find all the children of $p_{w_{ij}}$
that contain $B$ prefixed with $\true$ and 
add the rational numbers assigned to them. 
If the sum is less than $s$ return failure. Otherwise return success.
\end{flushleft}

\begin{flushleft}
$\verb|prob_test_neg|(P_{\geq s}B, p_{w_{ij}})$\\
find all the children of $p_{w_{ij}}$
that contain $B$ prefixed with $\true$ and 
add the rational numbers assigned to them. 
If the sum is greater or equal to $s$ return failure. Otherwise return success.
\end{flushleft}

If at least one of the above executions of the
two procedures returns failure, then mark $p_{w_{ij}}$ as not realizable
and move to to step \ref{enum:choose_p_w}.

Let $X$ be the set of all children of $p_{w_{ij}}$
to which a positive rational
number is assigned. Run step \ref{enum:p_closed} of the algorithm to every member of $X$.
If there exists one member of $X$
where step \ref{enum:p_closed} of the algorithm fails then mark
$p_{w_{ij}}$ unrealizable.
Otherwise mark $p_{w_{ij}}$ realizable. Go to step \ref{enum:choose_p_w}.
\end{enumerate}

If, at the end of the algorithm there exists
a world path starting from the root,
that is marked realizable, return
``satisfiable''. Otherwise, return
``not satisfiable''.

\paragraph{\textbf{Correctness.}}
In order to prove our algorithm
correct it suffices to show that
for every world path $p_{w_{ij}}$:
\begin{equation}
\label{eq:real_sat}
p_{w_{ij}} \text{ is marked realizable} 
\Longleftrightarrow 
F_{p_{w_{ij}}} \text{ is }
\PPL\text{-satisfiable.}
\end{equation}

Let $p_{w_{ij}}$ be a world path. We prove the
two directions of \eqref{eq:real_sat}
separately:

\noindent ($\Longrightarrow$) We define the structure
$M= \langle U, W, H, \mu, v\rangle$ as follows:
\begin{align*}
U = \{ u_{kl} ~|~ & p_{u_{kl}} \text{ is marked realizable in the subtree }\\
& \text{of the tableau that has the first node 
of $p_{w_{ij}}$ as a root}\}~.
\end{align*}
And for every $u_{kl} \in U$, we have:
\begin{itemize}
\item
$W_{u_{kl}} = U$
and $H_{u_{kl}} = \powerset(W_{u_{kl}})$.
\item
For every $v_{mn}$, such that $p_{v_{mn}}$ is a child of
$p_{u_{kl}}$ we define
$\mu_{u_{kl}} (\{ v_{mn}\})$ to be the rational
number assigned to $p_{v_{mn}}$ (which can be $0$). 
For every $v_{mn} \in U$ 
such that $v_{mn}$ is not a child of $p_{u_{kl}}$, we set 
$\mu_{u_{kl}}(\{ v_{mn}\}) = 0$.
\item
for every $V \in H_u$: $\mu_{u_{kl}} (V) = 
\sum_{v_{mn} \in V} \mu_{u_{kl}} (\{ v_{mn}\})$.
\item
$v_{u_{kl}}$ is the evaluation that satisfies
$B_{p_{u_{kl}} }$. We know that such an evaluation exists
since $p_{u_{kl}} $ can be marked satisfiable only
if $\algo$ succeeds in $B_{p_{u_{kl}}}$.
\end{itemize}
Since for every $u_{kl} \in U$ the rational numbers assigned
to $p_{u_{kl}}$'s children were
selected in a way such that their sum equals $1$,
it is straightforward to show that
$M$ is a $\PPL$-model. We will now show that for
every $B \in \subf(A)$ and for every $u_{kl} \in U$:
\begin{equation}
\label{eq:tabl_model}
(u_{kl}~\true~B \Longrightarrow M,u_{kl} \models B)
\text{ and }
(u_{kl} ~\false~B \Longrightarrow M,u_{kl} \not\models B).
\end{equation}
We proceed by induction on $B$. The only
interesting case is when $B \equiv P_{\geq s} C$. 
Assume that $u_{kl} ~\true~P_{\geq s} C$. Then we
have
\[
\sum_{ \{ p_{v_{mn}} ~|~ p_{v_{mn}} \text{ is a child of } p_{u_{kl}}
\text{ and } v_{mn} ~\true~C \} } r_{v_{mn}} \geq s~,
\]
where $r_{v_{mn}}$ is the rational number that
is assigned to $p_{v_{mn}}$.
By i.h. and by the definition of $M$ we have
\[
\sum_{ \{ v_{mn} ~|~ v_{mn} \in W_{u_{kl}} \text{ and }
M,v_{mn} \models C \} } \mu_{u_{kl}} (\{v_{mn}\}) \geq s,
\]
which by the additivity of $\mu_{u_{kl}}$ gives us
\[
 \mu_{u_{kl}}([C]_{M,u_{kl}}) \geq s,
\]
i.e.
\[
M, u_{kl} \models P_{\geq s} C~.
\]
Exactly the same arguments prove the right
conjunct of \eqref{eq:tabl_model} and
this concludes the proof of the if direction of
\eqref{eq:real_sat}.

\noindent ($\Longleftarrow$) We prove the claim by induction on the
depth of $p_{w_{ij}}$ in the tableau for $A$.

If $p_{w_{ij}}$ ends in a leaf, then $p_{w_{ij}}$ is a $P$-closed world path,
which implies that $F_{p_{w_{ij}}} \equiv B_{p_{w_{ij}}}$.
The fact that
$F_{p_{w_{ij}}}$ is $\PPL$-satisfiable
implies that there is an evaluation
that satisfies $B_{p_{w_{ij}}}$. This implies
that $\algo$ is successful on $B_{p_{w_{ij}}}$,
so $p_{w_{ij}}$ is marked realizable.

Assume that $p_{w_{ij}}$ has children. Assume
that $F_{p_{w_{ij}}}$ is satisfiable in world $w_{ij}$ of the
$\PPL$-model $M$. Since $p_{w_{ij}}$ is $P$-open
there is a $K$ and some $C_{r}$'s and $s_{r}$'s 
such that $M, w \models \bigwedge^K_{r=1} P_{\op_r s_{r}} C_{i}$ for $\op_r \in \{\geq, <\}$. 
Let
\[
\cpnf( C_1, \ldots, C_K ) = \{ a_1, \ldots, a_m\}~.
\]
By propositional reasoning we can show that
$M, w \models \bigwedge^K_{r=1} P_{\op_r s_{r}} D_{r}$ where
every $D_r$ is equivalent to a disjunction of some $a_k$'s. 
Now we proceed as in
the proof of \Fref{thm:smp_PL}. We show that the
fact that the $P_{\op_r s_r} D_r$'s are satisfied in $w_{ij}$
implies that there is a linear system $\system$
which has as a solution a vector, every
entry of which corresponds to the
measure of some $a_k$. By \Fref{thm:lin_ineq_eq_thm} we
can show that at most $|A|$ entries of
a solution for $\system$ have to be positive. And
each of these entries has size at most
$2 \cdot \big ( |A| \cdot || A || + |A| \cdot 
\log_2 (|A|) + 1\big )$.
Recall that each $a_k$ is assigned to a child of
$p_{w_{ij}}$. So, there are at most $|A|$ positive
rational numbers that are assigned to children
of $p_{w_{ij}}$. The algorithm that traverses the tableau
for $A$ should be
able to find them. Then the algorithm
should be able to verify that these
rational numbers sum to $1$ and satisfy the nodes
that contain $P_{\op_r} C_r$ in $p_{w_{ij}}$. Also the fact that
$F_{p_{w_{ij}}}$ is satisfiable implies that
$\algo$ succeeds in $B_{p_{w_{ij}}}$. Furthermore
the $a_j$'s that correspond to
positive measures are satisfiable in $M$. This implies
that for every $p_{u_{jk}}$, such that $p_{u_{jk}}$ is a child of
$p_{w_{ij}}$ and a positive rational
number is assigned to $p_{u_{jk}}$, then $F_{p_{u_{jk}}}$ is
satisfied in $M$. By the induction hypothesis we have that the
children of $p_{w_{ij}}$ that correspond to positive measures
are marked realizable. We conclude that $p_{w_{ij}}$ is marked
realizable.

This concludes the proof of
\eqref{eq:real_sat}.

\paragraph{\textbf{Complexity Analysis.}}
We will show that our algorithm can decide
whether there exists a world path starting from $A$ that
should be marked realizable by using only
a polynomial number of bits and a 
$\cclass$-oracle. We observe that
whether $p_{w_{ij}}$ should be marked realizable
only depends on $p_{w_{ij}}$ and the subtree below it.
So, we can traverse the tableau tree in
depth first fashion, reusing space.
For every $p_{w_{ij}}$ we need a polynomial number
of bits to store $p_{w_{ij}}$ itself, the positive rational
numbers assigned to some of its children and the
$\cpnf$-formulas that are assigned to these children. 
We can verify (using the 
$\cclass$-oracle) that $B_{p_{w_{ij}}}$ is satisfiable
and that the probabilistic constraints in
$p_{w_{ij}}$ are satisfied. Then we can move to the
first child of $p_{w_{ij}}$ (among those of which a positive 
probability is assigned) and
repeat the same procedure. Clearly, once we
have that the first child of $p_{w_{ij}}$ is marked
realizable we do not need the space used for
this child any more. So, this space can be used
for the next child. We conclude that the
maximum number of information that we have to
store each time is at most equal to the
depth of the tree (which is polynomial on $|A|$)
times the number of bits needed to process
a single world path (which as we observed
is polynomial on $|A|$ again). We conclude
that our algorithm runs in polynomial space using a 
$\cclass$-oracle. 
\end{proof}

\subsubsection{The Lower Bound}

Before showing the reduction from modal logic $\D$
we observe that the tableau decision
procedure implies a small model property
for $\PPL$.

\begin{corollary}(Small Model Property for $\PPL$)
\label{cor:smp_ppl}
Let $A \in \lanPPL$ be satisfiable. Then $A$ is
satisfiable in a $\PPL$-model $M = \langle U, W,H, \mu,
v \rangle$, where
\begin{enumerate}
\item
$U \leq 2^{|A|}$;
\item
for each $w \in U$,
\begin{enumerate}
\item
$W_w = U$ and $H_w = \powerset(U_w)$;
\item
for every $V \in H_w: \mu_w(V) = \sum_{v \in V} \mu(\{v\})$.
\end{enumerate}
\end{enumerate}
\end{corollary}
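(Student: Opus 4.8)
The plan is to read off the small model directly from the correctness proof of \Fref{thm:upper_bound_ppl}, since the tableau construction already produces a model whenever $A$ is satisfiable. Concretely, suppose $A \in \lanPPL$ is satisfiable. Build the tableau for $A$ and run the traversal algorithm. Because $A$ is satisfiable, the root world path $p_{w_{ij}}$ satisfies $F_{p_{w_{ij}}}$ (indeed $F$ for the root is essentially $A$ itself up to the propositional decomposition), so by the $(\Longleftarrow)$ direction of \eqref{eq:real_sat} the root is marked realizable. I would then simply take $M = \langle U, W, H, \mu, v\rangle$ to be exactly the structure constructed in the $(\Longrightarrow)$ direction of that proof, with the first node of the root world path as the subtree root, so that
\[
U = \{ u_{kl} ~|~ p_{u_{kl}} \text{ is marked realizable in the tableau for } A\}~.
\]
The $(\Longrightarrow)$ construction already guarantees $W_w = U$, $H_w = \powerset(U)$, and $\mu_w(V) = \sum_{v \in V}\mu(\{v\})$ for every $w \in U$, which are exactly properties $2(a)$ and $2(b)$; and the proof of \eqref{eq:tabl_model} shows $M, w \models A$ at the realizable root, so $A$ is satisfied in $M$.

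The one remaining point is the cardinality bound $|U| \leq 2^{|A|}$. Here I would argue that the worlds of $U$ are in bijection with (a subset of) the realizable world paths, and each realizable world path is determined by a $\cpnf$-formula over subformulas of $A$. More carefully, every world created by an application of rule $\tabProb$ is labelled by a $\cpnf$-formula $a_l \in \cpnf(B_1,\ldots,B_m)$, where the $B_k$ range over the boxed subformulas of the parent path; since such a $\cpnf$-formula is a choice of sign $\pm$ for each of at most $|A|$ subformulas, there are at most $2^{|A|}$ distinct labels available, and at most this many realizable world paths survive into $U$. This gives $|U| \leq 2^{|A|}$.

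The step I expect to require the most care is pinning down the counting argument for $U \leq 2^{|A|}$ rigorously, because the tableau tree itself can be exponentially large and a priori contains many world paths. The key observation making the bound work is that, after marking, the model $M$ only retains world paths that are realizable, and the distinct \emph{worlds} of $M$ correspond to distinct $\cpnf$-formulas over $\subf(A)$ rather than to arbitrary tree nodes; two world paths carrying the same $\cpnf$-label and the same realizable status can be identified as the same world of the model. Since $|\subf(A)| \leq |A|$, there are at most $2^{|A|}$ such sign-vectors, so $|U| \leq 2^{|A|}$ as claimed. Everything else is a direct quotation of the already-proved correctness argument, so no new semantic work is needed beyond invoking \eqref{eq:real_sat} and \eqref{eq:tabl_model}.
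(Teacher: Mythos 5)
Your core construction is exactly the paper's: the paper also derives this corollary by noting that satisfiability of $A$ forces some world path starting at the root node $w~\true~A$ to be marked realizable (the $(\Longleftarrow)$ direction of \eqref{eq:real_sat}), and then reuses the model built in the $(\Longrightarrow)$ direction of that same proof, whose definition directly yields properties 2(a) and 2(b) and whose truth lemma \eqref{eq:tabl_model} yields satisfaction of $A$. One small imprecision on your side: because of $\andF$-branching there are in general several world paths starting at the root, and satisfiability of $A$ only guarantees that \emph{some} one of them has a satisfiable $F$-formula, not ``the'' root path; this is harmless.

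Where you genuinely diverge is item 1, the bound $|U| \leq 2^{|A|}$, and here you correctly sensed a problem that the paper itself glosses over: the paper's proof disposes of the bound with the single sentence that the tableau is finite, hence so is the model, which establishes finiteness only --- the set $U$ of all realizable world paths in the $(\Longrightarrow)$ construction can a priori be much larger than $2^{|A|}$. Your repair, however, has two gaps. First, the counting is wrong as stated: the label of a child of $p_{w_{ij}}$ is an element of $\cpnf(B_1,\ldots,B_m)$ where $\{B_1,\ldots,B_m\}=\{B \mid P_{\geq s}B \in p_{w_{ij}}\}$ \emph{varies from path to path}, so labels are sign vectors over varying subsets of the boxed subformulas of $A$, not total sign vectors over $\subf(A)$; the number of such signed subsets is of order $3^k$ ($k$ the number of boxed subformulas), and for nested formulas this can exceed $2^{|A|}$ (e.g.\ $A = P_{\geq s}P_{\geq t}P_{\geq u}\,p$ has $|A|=4$ and $k=3$, with $3^3 > 2^4$), so bounding the labels that actually occur requires an argument about the tableau's structure, not a raw count. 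Second, ``identifying two realizable world paths with the same label'' is not a cost-free operation: each world path carries its own measure (the rationals assigned to \emph{its} children) and its own subtree, so merging forces you to choose a representative measure, redirect the other parent's probability mass onto the representative, and then re-prove the truth lemma \eqref{eq:tabl_model} on the quotient structure; this works precisely because the truth of each boxed formula $B$ at a child depends only on the child's label, but that is the content of an induction, not something one may simply assert. A cleaner way to close the gap is a filtration on the finite model $M$ already produced by \Fref{thm:upper_bound_ppl}: identify worlds of $M$ that satisfy the same subset of $\subf(A)$, define the measure of an equivalence class as the $\mu_u$-mass of its union, and check the truth lemma by induction on subformulas; this gives at most $2^{|\subf(A)|} \leq 2^{|A|}$ worlds while preserving $W_w = U$ and $H_w = \powerset(U)$.
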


\begin{proof}
The fact that $A$ is satisfiable implies that the
tableau procedure for $A$ succeeds. So, if we start
a tableau procedure with $w~\true~A$ in the root
we should find a realizable world path that starts
from $w~\true~A$. Then as in the proof of the proof of the "if" direction
of \eqref{eq:real_sat} we can construct a $\PPL$-model
for $A$ that satisfies the properties in the
statement of this corollary. Since the tableau for $A$ is
finite, the size of the model satisfying $A$ is
finite. 
\end{proof}

Now we proceed with some standard definitions
from modal logic. The language
of modal logic, $\lanMod$, is described
by the following grammar:
\[
A :: = p ~|~ \lnot A ~|~ A \land A ~|~ \Box A~,
\]
where $p \in \Prop$.
A Kripke model is structure $M = \langle W, R, v\rangle$
where $W$ is a non-empty set of worlds, 
$R \subseteq W \times W$ and $v$ is a function that assigns a
truth assignment (for classical propositional
logic) to every world in $W$. For each 
$w \in W$ we define the following set:
\[
R[w] = \{ u~|~ (w,u) \in R\}~.
\]
The semantics of $\lan_{\Box}$-formulas is given by the
following definition:

\begin{definition}[Truth in a Kripke model]
Let $M = \langle W, H, v \rangle$ be a Kripke model and let $A \in \lanMod$.
We define what it means for $A$ to hold in the world $w$ of $M$
(written as $M, w \models A$) by distinguishing the following cases:
\begin{description}
\item[$A \equiv \Box B$:]
\[
M, w \models \Box B \Longleftrightarrow (\forall u \in R[w]) [M, u \models B]
\]
\item[$A \equiv p \in \Prop$:]
\[
M, w \models p \Longleftrightarrow v_w (p) = \true~.
\]
\end{description}
The propositional connectives are treated classicaly.
\end{definition}

A Kripke model $M = \langle W, R, v\rangle $ is serial if for
every $w \in W$, $R[w] \neq \emptyset$.
$\D$ is the modal logic that is sound and
complete with
respect to serial Kripke models~\cite{halmos92}.

Now we can show the lower bound.

\begin{theorem}
\label{thm:lower_bound_PPL}
$\PPLSat$ is $\pspace$-hard.
\end{theorem}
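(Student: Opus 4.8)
The goal is to show $\PPLSat$ is $\pspace$-hard by reducing from satisfiability in the modal logic $\D$, which is known to be $\pspace$-complete.The plan is to give a polynomial-time reduction from the satisfiability problem for modal logic $\D$, which is well known to be $\pspace$-complete, into $\PPLSat$. The bridge between the two logics is the observation that, in any $\PPL$-model, every probability space $\langle W_w, H_w, \mu_w\rangle$ has $W_w \neq \emptyset$ and total mass $1$; this built-in non-emptiness will play the role of the seriality condition, and the operator $P_{\geq 1}$ will mimic the box $\Box$ of $\D$.

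Concretely, I would define a translation $\tau \colon \lanMod \to \lanPPL$ by $\tau(p) = p$ for $p \in \Prop$, commuting with $\lnot$ and $\land$, and setting $\tau(\Box B) = P_{\geq 1}\,\tau(B)$. Since $\Prop \subseteq \basis(\lanPPL)$, the target is a legitimate $\lanPPL$-formula, and $\tau$ is computable in linear time, so the reduction is polynomial. It then remains to prove that $A$ is $\D$-satisfiable if and only if $\tau(A)$ is $\PPL$-satisfiable.

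For the forward direction I would start from a finite serial Kripke model $\langle W, R, v\rangle$ satisfying $A$ (finiteness is available from the finite model property of $\D$), take $U = W$, and for each $w$ place a full-support probability measure $\mu_w$ on $W_w := R[w]$, keeping the valuation on atoms unchanged. Seriality guarantees each $R[w] \neq \emptyset$, so each $\mu_w$ is well defined. A routine induction on $B$ then yields $M, w \models B \iff M', w \models \tau(B)$; in the modal case $P_{\geq 1}\tau(B)$ holds at $w$ exactly when $\tau(B)$ is true at every positive-measure world, i.e. at every element of $R[w]$, matching $\Box B$. For the converse I would take a $\PPL$-model of $\tau(A)$ and first pass to a finite one using \Fref{cor:smp_ppl}, so that $W_w = U$ and $H_w = \powerset(U)$. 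I then define a Kripke model on $U$ by $u \in R[w] \iff \mu_w(\{u\}) > 0$. Since $U$ is finite and $\mu_w(U) = 1$, at least one world carries positive mass, hence $R[w] \neq \emptyset$ and the model is serial. The same induction, now reading $\mu_w([\tau(B)]) = 1$ in a finite atomic space as ``$\tau(B)$ holds at every positive-measure world'', establishes the truth correspondence and in particular the $\D$-satisfiability of $A$.

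The delicate point, and the only place where more than bookkeeping is needed, is the alignment of $P_{\geq 1}$ with $\Box$: over an arbitrary probability space $P_{\geq 1}$ is strictly weaker than the universal modality, since a measure-zero set of successors may falsify $\tau(B)$ while $P_{\geq 1}\tau(B)$ still holds. This gap closes precisely once the models are finite, which is why both the finite model property of $\D$ and \Fref{cor:smp_ppl} are invoked: in a finite, purely atomic probability space a measure-one event must contain every world of positive mass, and the support, being finite with total mass $1$, is non-empty, reinstating seriality on the nose.
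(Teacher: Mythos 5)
Your proposal is correct and follows essentially the same route as the paper: a reduction from $\D$-satisfiability via the translation $\Box \mapsto P_{\geq 1}$, using the finite model property of $\D$ together with a full-support (the paper uses uniform) measure on $R[w]$ for one direction, and \Fref{cor:smp_ppl} with the accessibility relation $R[w] = \{u \mid \mu_w(\{u\}) > 0\}$ for the other. Your remark on the measure-zero gap between $P_{\geq 1}$ and the universal modality, closed by finiteness of the small models, is exactly the point the paper's induction exploits.
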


\begin{proof}
We will reduce $\DSat$ to $\PPLSat$. Since $\DSat$ is
$\pspace$-complete~\cite{halmos92} our theorem follows.
Let $A \in \lanMod$ and let $f(A)$ be the
$\PPL$-formula obtained
from $A$ by replacing every occurrence
of $\Box$ by $P_{\geq 1}$. We will
show the following equivalence:
\begin{equation}
\label{eq:lower_bound_sat_eq}
A \text{ is $\D$-satisfiable if and only if } f(A)
\text{ is $\PPL$-satisfiable}~.
\end{equation}
($\Longrightarrow$)
Assume that $A$ is satisfiable. By
the small model theorem for modal logic 
$\D$~\cite{halmos92} it is satisfiable in a serial
Kripke model $M_D = \langle U, R,v\rangle$, where
$U$ is finite.
We define $M_{\PPL} = \langle U, W, H, \mu, 
v' \rangle $ where for every $w \in U$,
\begin{align*}
W_w & = R[w],\\
H_w & = \powerset (W_w),\\
v'_w & \text{ is an extension of $v_w$ to 
$\basis(\lanPPL)$-formulas},
\end{align*}
and for every $V \in H_w$,
\[
\mu_w (V) = \frac{|V|}{|R[w]|}~. 
\]
It is easy to show that $M_{\PPL}$ is
a $\PPL$-model. We will now show that
\[
(\forall w \in U)
(\forall B \in \subf(A))
\big [ M_D, w \models
B \Longleftrightarrow M_{\PPL}, w \models
f(B) \big ]
\]
by induction on the complexity of $B$. The
only interesting case is when $B$ is
of the form $\Box C$. Then
we have
\begin{align}
\nonumber
M_D , w & \models \Box C & 
\Longleftrightarrow\\
\nonumber
(\forall u \in R[w])
\big [ M_D , u & \models C \big ] & 
\stackrel{\ih}{\Longleftrightarrow}\\
\label{eq:lower_bound_1}
(\forall u \in W_w)
\big [ M_{\PPL} , u & \models f(C) \big ].
\end{align}
Now we have that $W_w = 
[f(C)]_{M_{\PPL},w}$ which
immediately implies that
\[
\mu_w ([f(C)]_{M_{\PPL},w}) = 1~.
\]
On the other hand assume that
$\mu_w ([f(C)]_{M_{\PPL},w}) = 1$.
Then if $[f(C)]_{M_{\PPL},w} \subsetneq
W_w$ then, by the definition of $\mu_w$
we get that $\mu_w (W_w \setminus [f(C)]_{M_{\PPL},w}) > 0$,
which, by the additivity of $\mu_w$
contradicts the fact that 
$\mu_w ([f(C)]_{M_{\PPL},w}) = 1$.
We conclude that
\eqref{eq:lower_bound_1} is equivalent
to the following:
\begin{align*}
\mu_w ([f(C)]_{M_{\PPL},w}) & = 1
& \Longleftrightarrow\\
M_{\PPL}, w \models & P_{\geq 1} f(C)
& \Longleftrightarrow\\
M_{\PPL}, w \models & f(B).
\end{align*}
($\Longleftarrow$) Assume that
$f(A)$ is satisfiable. Then $f(A)$ is
satisfiable in a model
$M_{\PPL} = \langle U, W, H, \mu, v\rangle$ that
satisfies the properties of \Fref{cor:smp_ppl}. 
Let $M_D = \langle U, R, v'\rangle$,
where, for every $w\in W$, 
\[
R[w] = \{ u \in W_w~|~ \mu_w(\{u\}) > 0\}
\]
and $v'_w$ is the restriction of $v_w$ to $\Prop$.
It is straightforward
to show that $M_D$ is a serial Kripke
structure. We
will now show that
\[
(\forall w \in W)
(\forall B \in \subf(A))
\big [ M_{\PPL}, w \models
f(B) \Longleftrightarrow M_D, w \models 
B \big ]
\]
by induction on the complexity of $B$.
The only interesting case is when
$B \equiv \Box C$. We have that
\begin{align}
\nonumber
M_D , w & \models \Box C & 
\Longleftrightarrow\\
\nonumber
(\forall u \in R[w])
\big [ M_D , u & \models C \big ] & 
\stackrel{\ih}{\Longleftrightarrow}\\
\label{eq:lower_bound_2}
(\forall u \in W_w)
\big [ \mu_w(\{u\}) > 0 & \Longrightarrow M_{\PPL} , u \models f(C) \big ]~.
\end{align}
So $[f(C)]_{M_{\PPL},w} \supseteq 
\{ u \in W_w ~|~ \mu_w (\{ u\} > 0)\}$.
Hence
\begin{align*}
\mu_w ([f(C)]_{M_{\PPL},w}) & =
\sum_{u \in [f(C)]_{M_{\PPL},w} }
\mu(\{ u\}) \\ 
& = \sum_{u \in W_w ~|~ \mu_w(\{ u \}) > 0 }
\mu(\{ u\}) = 
\sum_{u \in W_w}
\mu(\{ u\}) = 1 ~.
\end{align*}
On the other hand assume that
$\mu_w ([f(C)]_{M_{\PPL},w}) = 1$.
Let $u \in W_w$ such that 
$\mu_w(\{ u \}) > 0$. Assume that
$M_{\PPL},u \not\models f(C)$.
Then $u \in W_w \setminus 
[f(C)]_{M_{\PPL},w}$.
So $\mu_w (W_w \setminus 
[f(C)]_{M_{\PPL},w}) > 0$,
which contradicts the fact that
$\mu_w ([f(C)]_{M_{\PPL},w}) = 1$.
So for all $u \in W_w$, $\mu_w(\{ u\})>0$
implies that 
$M_{\PPL},u \models f(C)$.
We conclude that \eqref{eq:lower_bound_2}
is equivalent to
\begin{align*}
\mu_w ([f(C)]_{M_{\PPL},w}) & = 1
& \Longleftrightarrow\\
M_{\PPL},w \models
& P_{\geq 1} f(C)
& \Longleftrightarrow\\
M_{\PPL},w \models & f(B).
\end{align*}
We conclude that \eqref{eq:lower_bound_sat_eq}
holds, which proves our theorem. 
\end{proof}

\section{Applications}

In this section we apply the results of Sections
\ref{sec:non-iter} and \ref{sec:iter} in
probabilistic logics over classical propositional logic
and justification logic.

\label{sec:appl}

\subsection{Probabilistic Logics over Classical Propositional Logic}

Let $\CP$ denote classical propositional logic. If we define as basic formulas the atomic
propositions (i.e. elements of the set  $\Prop$) and if we define as evaluations
traditional truth assignments for classical propositional logic (on the set $\Prop$), then
we can define the non-iterated and the iterated probabilistic logic over classical
propositional logic, which according to our notation are $\PCP$ and $\PPCP$ respectively.
These logics have already been defined in \cite{ograma09} as $\LPPTwo$
and $\LPPOne$ respectively. 
So, we have have the following corollary:
\begin{corollary}
\label{cor:LLP1_LPP2_compl}
\begin{enumerate}
\item
\label{enum:LLP2_np_c}
$\LPPTwoSat$ is $\np$-complete.
\item
\label{enum:LLP1_pspace_c}
$\LPPOneSat$ is $\pspace$-complete.
\end{enumerate}
\end{corollary}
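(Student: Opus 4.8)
The plan is to specialize the two parameterized upper bounds, \Fref{thm:upper_bound_pl} and \Fref{thm:upper_bound_ppl}, to the base logic $\logic = \CP$, and then to match them with lower bounds. Recall that $\PCP = \LPPTwo$ and $\PPCP = \LPPOne$, so that $\PCP_{\sat} = \LPPTwoSat$ and $\PPCP_{\sat} = \LPPOneSat$. The first and only genuinely computational step is to identify the oracle class $\cclass$ appearing in those theorems, i.e.\ the complexity of deciding satisfiability of $\cpnb$-formulas in $\CP$. Since the basic formulas of $\CP$ are exactly the atomic propositions, every $\cpnb$-formula is a conjunction of literals $\bigwedge_i \pm p_i$ with $p_i \in \Prop$; such a conjunction is satisfiable iff no atom occurs both positively and negatively, a condition that is checkable in deterministic polynomial time. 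Hence I would take $\cclass = \p$.

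Feeding $\cclass = \p$ into the upper-bound theorems immediately yields the upper halves of both claims. \Fref{thm:upper_bound_pl} gives $\LPPTwoSat \in \np^{\p} = \np$, and \Fref{thm:upper_bound_ppl} gives $\LPPOneSat \in \pspace^{\p} = \pspace$, using the standard facts that a polynomial-time oracle collapses into $\np$ and into $\pspace$.

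For the lower bounds I would treat the two cases separately. The $\pspace$-hardness of $\LPPOneSat$ comes for free from \Fref{thm:lower_bound_PPL}: its reduction from $\DSat$ replaces each $\Box$ by $P_{\geq 1}$ in a modal formula whose atoms are elements of $\Prop$, so the output already lies in $\lanPPL$ for $\PPL = \PPCP$; as $\DSat$ is $\pspace$-complete, $\LPPOneSat$ is $\pspace$-hard. For $\np$-hardness of $\LPPTwoSat$ I would give a direct polynomial reduction from classical $\sat$, sending a propositional formula $\alpha$ to the $\PCP$-formula $P_{>0}\alpha$ (which abbreviates $\lnot P_{\geq 1}\lnot\alpha$): a probability space over $\CP$-evaluations assigns positive measure to $[\alpha]$ exactly when some evaluation satisfies $\alpha$, so $P_{>0}\alpha$ is $\PCP$-satisfiable iff $\alpha$ is classically satisfiable.

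Combining each upper bound with its matching lower bound proves that $\LPPTwoSat$ is $\np$-complete and $\LPPOneSat$ is $\pspace$-complete. There is essentially no obstacle here beyond the bookkeeping: the one place to verify carefully is that $\cpnb$-satisfiability over $\CP$ really is in $\p$ (so that the oracles disappear) and that the hardness reduction of \Fref{thm:lower_bound_PPL} stays within the classical-base fragment — both of which are transparent for $\CP$ but are exactly the hypotheses that would need re-examination for a richer base logic.
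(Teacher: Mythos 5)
Your proposal is correct and follows essentially the same route as the paper: identify $\cpnb$-satisfiability over $\CP$ as decidable in polynomial time, set $\cclass = \p$ in Theorems \ref{thm:upper_bound_pl} and \ref{thm:upper_bound_ppl} to get $\np^{\p} = \np$ and $\pspace^{\p} = \pspace$, and obtain $\pspace$-hardness from \Fref{thm:lower_bound_PPL}. The only (immaterial) divergence is the $\np$-hardness of $\LPPTwoSat$: the paper observes that $\LPPTwo$ extends classical propositional logic, so $\sat$ reduces via the identity map (a classical $\alpha$ is $\LPPTwo$-satisfiable iff it is classically satisfiable), whereas you reduce via $\alpha \mapsto P_{>0}\alpha$ --- both are valid one-line reductions.
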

\begin{proof}
The satisfiability problem for
$\cpnb$-formulas in classical propositional logic can
be decided in polynomial time (we simply have to check 
whether the $\cpnb$ formula contains
an atomic proposition and its negation). So if we set
$\cclass=\p$ in Theorems \ref{thm:upper_bound_pl} and
\ref{thm:upper_bound_ppl}
we conclude that $\LPPTwoSat \in \np$ and that
$\LPPOneSat \in \pspace$.
The lower bounds follow from 
the fact that $\LPPTwo$ is an extension of classical propositional logic
and from \Fref{thm:lower_bound_PPL}.
\end{proof}

As it was observed in \cite{ograma09} the result
of  \Fref{cor:LLP1_LPP2_compl}\eqref{enum:LLP2_np_c} can be obtained
a straightforward application of the methods of \cite{fahame90}. Also in \cite{faginH94}, a proof
sketch (without many details) for
 \Fref{cor:LLP1_LPP2_compl}\eqref{enum:LLP1_pspace_c} was given. In this paper
we gave formal proofs for both results.

\subsection{Probabilistic Logics over Justification Logic}

Before defining the non-iterated and the iterated probabilistic 
justification logic, we briefly recall
justification logic $\J$ and its satisfiability
algorithm. 

\subsubsection{The basic Justification Logic \texorpdfstring{$\J$}{}}

The language of justification logic~\cite{artemovF16}
is defined by extending the language of classical
propositional logic with formulas of the form
$t : \alpha$ where $t$ is a justification
term, which is used to represent evidence, 
and $\alpha$ is a justification formula,
which is used to represent propositions,
statements or facts. As we will see later,
formula $\alpha$ might contain terms as well.
The formula
$t: \alpha$ reads as ''\emph{$t$ is a justification 
for believing $\alpha$}'' or as ''\emph{$t$ justifies
$\alpha$}''.
For example, assume that we have an agent who sees a snake behind
him/her. Whereas in traditional modal logic we can express a statement like ``the agent believes/knows that he/she is in danger'', in justification logic we can express a statement like ``the agent is in danger because there is a snake
behind him/her''. In the last statement an observation of the snake can serve as a justification. So, in justification
logic the representation of knowledge becomes explicit.

\emph{Justification terms} are built from countably many 
constants and countably many variables according to the 
following grammar:
\[
t :: = c ~|~ x ~|~ (t \cdot t) ~|~ (t+t) ~|~!t~,
\]
where $c$ is a constant and $x$ is a variable. 
$\Tm$ denotes the set of all terms
and $\Con$ denotes the set of all constants.
For $c \in \Con$ and $n \in \mathbb{N}$ we define
\[
!^0 c := c \qquad\text{and}\qquad
!^{n+1}c := {!} ~ ({!^n} c)~.
\]
The operators $\cdot$ and $+$ are assumed to be
left-associative.  The intended meaning of the connectives
used in the set $\Tm$ will be clear when we present the deductive system for
$\J$.
 
Formulas of the language $\lanJ$ (justification
formulas) are 
built according to the following grammar:
\[
\alpha :: = p~|~ t: \alpha ~|~ \lnot \alpha ~|~ \alpha \land \alpha~,
\]
where $t \in \Tm$ and $p \in \Prop$. Following our previous notation we have
\[
\basis(\lanJ) = \Prop \cup \{ t : \alpha ~|~ \alpha \in \lanJ \}~.
\]

The deductive system for $\J$ is the Hilbert 
system presented in \Fref{tab:system_J}.
Axiom $\axJ$ is 
also called the \emph{application axiom} and is the 
justification logic analogue of application axiom
in modal logic.
It states that we can combine a justification
for $\alpha \to \beta$ and a justification for
$\alpha$ in order to obtain a justification for $\beta$.
Axiom $\axPlus$, which is also called the
\emph{monotonicity axiom}, states that
if $s$ or $t$ is a justification for $\alpha$
then the term $s + t$ is also a justification
for $\alpha$. This operator can model monotone
reasoning like proofs in some formal system of
mathematics: if I already have a proof $t$ for a formula $\alpha$,
then $t$ remains a proof for $\alpha$ if I add a few more lines in $t$.
Rule $\ANE$ states that any constant
can be used to justify any axiom and also that we can use the
operator $!$ to express positive introspection:
if $c$ justifies axiom instance $\alpha$, then $!c$
justifies $c:\alpha$, $!!c$ justifies $!c:c:\alpha$ and so on. 
The previous situation is the explicit analogue of
the positive iteration of modalities in traditional
modal logic: I know
$\alpha$, I know that I know $\alpha$ and so on.
The operator $!$ is also called proof checker or proof verifier.
This is because we can think that $\alpha$ is a problem given
to a student, $c$ is the solution (or the proof) given by the student
and $!c$ is the verification of correctness for the proof
given by the tutor. So justification logic can model
the following situation:
\begin{description}
\item[\textbf{student:}] I have a proof for $\alpha$ (i.e. $c: \alpha$).
\item[\textbf{tutor:}]  I can verify your proof for $\alpha$ (i.e. $!c : c: \alpha$)
\end{description}

In justification logic it is common
to assume that only some constants justify some
axioms (see the notion of \emph{constant specification}
in \cite{artemovF16}). However, 
for the purposes of this paper
it suffices to assume that every constant justifies every
axiom (this assumption corresponds to the notion
of a \emph{total constant specification}~\cite{artemovF16}).

\begin{table}
{
\centering
\renewcommand*{\arraystretch}{1.25}
\begin{tabular}{|c l|}
\hline
Axioms:& \\
$\axP$ & finite set of axiom schemata axiomatizing
classical\\
& propositional logic in the language $\lanJ$\\
$\axJ$ & $\vdash s : (\alpha \to \beta) \to
( t :\alpha \to s \cdot t : \beta ) $\\
$\axPlus$ & $\vdash  ( s: \alpha \lor t : \alpha 
 ) \to  s+t: \alpha$\\
Rules: & \\
$\MP$ & \text{if }$T \vdash \alpha \text{ and } T 
\vdash \alpha \to \beta \text{ then } T \vdash 
\beta$\\
$\ANE$ & $\vdash {!^{n}} c : {!^{n-1}} c : \cdots : 
{!c} : c : \alpha$,
where $c$ is a constant, $\alpha$ is\\
& an instance of $\axP$, $\axJ$ or $\axPlus$ and 
$n \in \Nat$\\
\hline
\end{tabular}
\caption{\label{tab:system_J} The Deductive System $\J$}
}
\end{table}

In order to illustrate the usage of axioms and rules in $\J$ we present the following example:
\begin{example}
Let $a, b \in \Con$, $\alpha, \beta \in \lanJ$ and $x, y$ be variables. Then we have the following:
\[
\vdash_{\J} (x: \alpha \lor y:\beta) \to a \cdot x + b \cdot y: (\alpha \lor \beta).
\]
\end{example}

\begin{proof}
Since $\alpha \to \alpha \lor \beta$ and $\beta \to \alpha \lor \beta$ are instances of $\axP$, we can use
$\ANE$ to obtain
\[
\vdash_{\J} a : (\alpha \to \alpha \lor \beta)
\]
and
\[
\vdash_{\J} b : (\beta \to \alpha \lor \beta).
\]
Using  $\axJ$ and $\MP$ we obtain
\[
\vdash_{\J} x: \alpha \to a 
\cdot x : (\alpha \lor \beta)
\]
and
\[
\vdash_{\J} y: \beta \to b \cdot y : (\alpha \lor \beta).
\]

Using $\axPlus$ and propositional reasoning we obtain
\[
\vdash_{\J} x: \alpha \to a \cdot x +  b \cdot y : (\alpha \lor \beta)
\]
and
\[
\vdash_{\J} y: \beta \to a \cdot x +  b \cdot y : (\alpha \lor \beta).
\]
We can now obtain the desired result by applying propositional reasoning. 
\end{proof}

Logic $\J$ also enjoys the \emph{internalization property}, which is presented in the following theorem.
Internalization states that the logic internalizes its own notion of proof. The version without premises is an explicit form of the necessitation rule of modal logic. A proof of the following theorem can be found in~\cite{KuzStu12AiML}.

\begin{theorem}[Internalization]
For any $\alpha, ~\beta_1, \ldots , 
\beta_n \in \lanJ$ and $t_1, \ldots ,  t_n \in \Tm$,  if
\[
\beta_1, ~\ldots~,~ \beta_n \vdash_{\J} \alpha
\]
then there exists a term $t$ such that
\[
t_1 : \beta_1, ~\ldots~,~ t_n : \beta_n \vdash_{\J} t  : \alpha~.
\]
\end{theorem}

The models for $\J$ which
we are going to use in this paper are called $\mkr$-models
and were introduced by 
Mkrtychev~\cite{Mkr97LFCS}
for the logic $\LPlogic$. Later 
Kuznets~\cite{Kuz00CSLnonote} adapted these 
models for other justification logics
(including $\J$) and proved the corresponding
soundness and completeness theorems.
Formally, we have the following:

\begin{definition}[$\mkr$-Model]
\label{def:mkr_model}
An $\mkr$-model is a pair $\langle v, \evid
\rangle$, where $v: \Prop \to \{\true, \false\}$
and $\evid: \Tm \to \powerset(\lanJ)$
such that for every
$s, t \in \Tm$, for $c \in \Con$ and 
$\alpha,\beta \in \lanJ$, for $\gamma$ being an axiom
instance of $\J$ and $n \in \Nat$ we have
\begin{enumerate}
\item
$\big( \alpha \to \beta \in \evid (s) \text{ and }
\alpha \in \evid (t) \big)
\Longrightarrow \beta \in \evid (s \cdot t)$~;
\item
$\evid(s) \cup \evid(t) \subseteq \evid(s + t)$~;
\item
${!^{n-1} c} : {!^{n-2}} c: \cdots : !c : c : \gamma 
\in \evid (!^n c)$.
\end{enumerate}
\end{definition}
\begin{definition}[Truth in an $\mkr$-model]
We define what it means for an $\lanJ$-formula
to hold in the $\mkr$-model 
$M= \langle v, \evid \rangle$ inductively as follows (the
connectives $\lnot$ and $\land$ are treated classically):
\begin{align*}
M \models p & \Longleftrightarrow v(p) = \true \qquad
\text{for } p \in \Prop~;\\
M \models t : \alpha & \Longleftrightarrow \alpha \in \evid(t)~.
\end{align*}
\end{definition}

We close this section by briefly recalling 
the known complexity bounds for $\JSat$.
The next theorem
is due to Kuznets~\cite{Kuz00CSLnonote,Kuz08PhD}. 
We present it here
briefly using our own notation.

\begin{theorem}
\label{thm:upper_bound_cpnb_j}
The satisfiability problem for $\cpnb$-formulas in logic
$\J$ belongs to $\conp$.
\end{theorem}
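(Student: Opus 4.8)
The plan is to use the structure of an $\mkr$-model (\Fref{def:mkr_model}), which is a pair $\langle v, \evid\rangle$ whose two components are independent: the truth of an atom $p \in \Prop$ is decided by $v$ alone, while the truth of a justification assertion $t:\alpha$ is decided by $\evid$ alone, through the condition $\alpha \in \evid(t)$. Writing the input $\cpnb$-formula as $a \equiv \bigwedge_i \pm p_i \land \bigwedge_j \pm(t_j:\alpha_j)$, where each $p_i \in \Prop$ and each $t_j : \alpha_j \in \basis(\lanJ)$, its satisfiability decouples into two independent tasks. The propositional literals are jointly satisfiable iff no atom occurs both positively and negatively, a polynomial-time test. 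All the difficulty lies in the justification literals: deciding whether some evidence function places $\alpha_i$ in $\evid(t_i)$ for each positive assertion while keeping $\alpha_j$ out of $\evid(t_j)$ for each negative one.

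First I would note that the three defining conditions on $\evid$ in \Fref{def:mkr_model} are all monotone closure conditions, so the admissible evidence functions containing a given family of ``seed'' memberships are closed under intersection and hence admit a least member $\evid_{\min}$. Taking as seeds the positive assertions $\alpha_i \in \evid(t_i)$ together with the memberships forced by axiom necessitation, one checks that $a$ is satisfiable iff the propositional part is consistent and $\alpha_j \notin \evid_{\min}(t_j)$ for every negative assertion $\lnot(t_j:\alpha_j)$: the positive constraints and the closure are met in $\evid_{\min}$ by construction, and since any admissible $\evid$ meeting the positive constraints contains $\evid_{\min}$, the negative constraints are hardest to meet there. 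Consequently a $\cpnb$-formula is \emph{unsatisfiable} precisely when either the propositional part is inconsistent or $\alpha_j \in \evid_{\min}(t_j)$ for some negative assertion. Since $\conp$-membership of satisfiability is the same as $\np$-membership of unsatisfiability, it suffices to certify the latter in nondeterministic polynomial time.

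To do so I would set up a derivation calculus for the judgement $\gamma \in \evid_{\min}(u)$, with rules mirroring \Fref{def:mkr_model}: a seed rule for the positive assertions; a seed rule producing $!^{k-1}c:\cdots:c:\delta \in \evid(!^k c)$ for axiom instances $\delta$; a sum rule inferring $\gamma \in \evid(s+t)$ from $\gamma \in \evid(s)$ or $\gamma \in \evid(t)$; and an application rule inferring $\beta \in \evid(s\cdot t)$ from $\beta' \to \beta \in \evid(s)$ and $\beta' \in \evid(t)$. Note that $\evid_{\min}$ is itself infinite (it contains all axiom instances and all necessitation stacks), but we never compute it: we only need to certify the finitely many queries $\alpha_j \in \evid_{\min}(t_j)$, and each query is driven by the fixed term $t_j$, a subterm of the input and hence of polynomial size. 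A nondeterministic verifier guesses such a derivation and checks each step in polynomial time; checking the seed rules reduces to testing whether a formula is an instance of one of the finitely many schemata $\axP$, $\axJ$, $\axPlus$, which is pattern matching.

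The hard part will be bounding the size of the certificate, and this is exactly where the application rule threatens the argument: the cut formula $\beta'$ in an application step is \emph{not} a subformula of the conclusion $\beta$, so the subformula property fails and a naive recursion on the term $u$ has no a priori control over the intermediate formulas. Resolving this is the technical core of Kuznets's analysis~\cite{Kuz00CSLnonote,Kuz08PhD}: one shows that whenever $\alpha_j \in \evid_{\min}(t_j)$ holds there is a derivation in which every cut formula---being forced into an evidence set either by a seed or by an axiom instance anchored at the target---can be taken of size polynomial in $|a|$, so that the whole derivation is a polynomial-size object. Granting this bounding lemma, unsatisfiability is witnessed by a polynomial certificate verifiable with the schema checks above, placing unsatisfiability in $\np$ and therefore the satisfiability problem for $\cpnb$-formulas in $\J$ in $\conp$.
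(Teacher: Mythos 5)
Your overall strategy matches the paper's: split off the propositional literals (a trivial consistency check), and certify \emph{un}satisfiability by exhibiting, for a single negative assertion $\lnot(t_j:\alpha_j)$, a witness that $\alpha_j$ is forced into $\evid(t_j)$ in every admissible evidence function containing the positive assertions. Your framing via the minimal evidence function $\evid_{\min}$ (which exists because the closure conditions of \Fref{def:mkr_model} are preserved under intersection, so satisfiability is equivalent to propositional consistency together with $\alpha_j \notin \evid_{\min}(t_j)$ for every negative assertion) is correct, and it is a clean restatement of what the paper phrases model-theoretically as ``every $\mkr$-model satisfying the positive assertions also satisfies $n_j:\beta_j$''. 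The derivation calculus you set up, whose shape follows the syntax tree of the term $t_j$, is exactly the paper's certificate skeleton.

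The genuine gap is at the one step you yourself call ``the technical core'' and then black-box: bounding the certificate. Your certificate writes out the \emph{concrete} intermediate (cut) formulas of the derivation, so you need the lemma that these can always be chosen of size polynomial in $|a|$. You do not prove this, and it is not what the cited analysis of Kuznets establishes; his method (and the paper's proof) is designed precisely to \emph{avoid} needing such a bound. In the paper, the no-certificate consists only of the tree structure of the term together with an assignment of axiom schemes of $\J$ to constant leaves of the form $!^n c$ and of input formulas $\alpha_i$ to leaves that are terms of positive assertions; the intermediate formulas are never written down concretely. They are kept \emph{schematic}, with variables ranging over formulas and terms, and the verifier runs Robinson's unification bottom-up on directed-acyclic-graph representations, finally unifying the root with $\beta_j$. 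This matters because chains of unifications arising from nested applications can bind schematic variables to formulas that are exponentially large as strings yet polynomial as DAGs (the standard phenomenon of constraints $x_{i+1} \doteq f(x_i,x_i)$); consequently a certificate spelling out concrete cut formulas has no evident polynomial size bound, whereas the scheme-plus-unification certificate provably does. To close your argument you must either prove your bounding lemma for concrete string representations (which is nontrivial and quite possibly false), or replace the explicit cut formulas by the paper's schematic/DAG-unification certificate --- at which point the two proofs coincide.
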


\begin{proof}
Let $a$ be the $\cpnb$-formula of logic $\J$ that is
tested for satisfiability.
Assume that there is no $p \in \Prop$
such that $p$ appears both
positively and negatively in $a$ (otherwise it is clear
that $a$ is not satisfiable). So, the
satisfiability of $a$ depends only on the justification
assertions that appear in $a$.

Let $p_i:\alpha_i$ be the assertions that appear
positively in $a$ and let $n_i : \beta_i$
be the assertions that appear negatively in $a$.
A short
no-certificate is an object that has size polynomial in the
size of the input (i.e. of $|a|$) and that can witness that the
input is a no-instance of the problem in polynomial time.
So, if we can show  that the question
``does a model, which satisfies all the
$p_i : \alpha_i$'s and falsifies all the $n_i : \beta_i$'s,
exist?'' has a short no-certificate we have proved claim of the theorem.
For this purpose it suffices to guess some 
$n_j : \beta_j$ (i.e. some justification assertion that appears negatively in
$a$) and show
that every $\mkr$-model that satisfies all the 
$p_i : \alpha_i$'s, satisfies $n_j : \beta_j$ too.
Our guess will not only consist of the formula itself, but
also of the way this formula is constructed
from the $p_i : \alpha_i$'s and from the constants
that justify the axioms of $\J$. In order to verify that
our guess of the formula and its construction
is correct we have to be able to access all the formulas that are justified by a given term in finite time.
At first, it seems impossible to do this in finite time, 
since we have that some terms justify infinitely many
formulas (in particular 
every constant justifies all the axiom instances, which are 
infinitely many). However, since $\J$ is axiomatized
by finitely many axiom schemes we can use schematic
variables for formulas and terms. This way we have that
every constant justifies only finitely many axiom
schemes. So, the short no-certificate can be guessed 
as follows:
we non-deterministically choose some $n_j:\beta_j$.
The term $n_j$ is created by the connectives $\cdot$
and $+$ using a finite tree like the following one:
\begin{center}
\begin{tikzpicture}
\matrix (m) [matrix of math nodes, row sep = 10pt, column sep = 5pt]
{
& & +\\
& + & & & \cdot \\
\cdot & & \cdot \\
\vdots & & \vdots \\
};
\path (m-1-3) edge (m-2-2);
\path (m-1-3) edge (m-2-5);
\path (m-2-2) edge (m-3-1);
\path (m-2-2) edge (m-3-3);
\path (m-3-1) edge (m-4-1);
\path (m-3-3) edge (m-4-3);
\end{tikzpicture}
\end{center}
The above tree can be constructed in many ways
but we guess one such that
in the leaves there are terms of
the form $!^{n} c: \cdots : !c : c$ or some
$p_i$'s. Such a guess should be possible if the
no-certificate exists. 
The fact that the tree is constructed in that way means, $n_j : \beta_j$
is constructed by the axioms $\axPlus$, $\axJ$, the rule $\ANE$,
using the assumptions that every $p_i$ satisfies every $\alpha_i$.
We assign to each of the tree's leaves an
axiom scheme (if the leaf is of the form
$!^{n} c: \cdots : !c : c$) or an $\alpha_i$ 
(if the leaf is of the form $p_i$). Then starting from the leaves (i.e. bottom-up)
we unify the schematic formulas assigned to every tree node. If at some point
the unification is impossible, then our guess is not correct.
The bottom-up unification procedure ends when we construct a formula in the root.
This formula has to be unifiable with $\beta_j$, otherwise the guess is again not correct.
The formulas that we assigned to the leaves and
the structure of the tree compose the short
no-certificate. It can be shown that representing
formulas as directed acyclic graphs and using
Robinson's unification algorithm~\cite{corbinB83} we 
can verify that the
unifications succeed in polynomial time. Hence the
satisfiability for $\cpnb$-formulas in logic $\J$
belongs to $\conp$. 
\end{proof}

Finally, we can present the known complexity
bounds for $\JSat$.

\begin{theorem}
\label{thm:J_compl_bounds}
$\JSat$ is $\sig{p}{2}$-complete.
\end{theorem}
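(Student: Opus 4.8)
The plan is to prove the two bounds separately: membership in $\sig{p}{2}$ is a fairly direct consequence of \Fref{thm:upper_bound_cpnb_j}, whereas $\sig{p}{2}$-hardness is the substantive direction.

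For the upper bound I would observe that deciding $\JSat$ is exactly the ``$A \in \lan$'' case already handled in the first paragraph of the proof of \Fref{thm:upper_bound_pl}, now instantiated with $\lan = \lanJ$. Given $\alpha \in \lanJ$, the formula $\alpha$ is a Boolean combination of the basic formulas in $\subf(\alpha) \cap \basis(\lanJ)$, hence it is equivalent to the disjunction of those $a \in \cpnb(\alpha)$ whose induced evaluation makes $\alpha$ true; that is, $\alpha$ is in disjunctive normal form over $\basis(\lanJ)$. An $\np$-procedure therefore guesses a single disjunct $a \in \cpnb(\alpha)$ (of polynomial size, since it carries one sign per basic subformula), verifies in polynomial time that the evaluation determined by $a$ satisfies $\alpha$ --- this is legitimate because, by \Fref{lem:atomEval}, truth of $\alpha$ depends only on the values of its basic subformulas, and the check itself lies in $\mathsf{P}$ by \Fref{lem:atomSat} --- and finally queries an oracle to decide whether $a$ is $\J$-satisfiable. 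By \Fref{thm:upper_bound_cpnb_j} this last test is in $\conp$, so the whole procedure is an $\np$-computation with a $\conp$-oracle. Since $\np^{\conp} = \sig{p}{2}$, we obtain $\JSat \in \sig{p}{2}$.

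For the lower bound I would establish $\sig{p}{2}$-hardness by a reduction from the truth problem for quantified Boolean formulas $\exists \vec{x}\, \forall \vec{y}\, \phi(\vec{x}, \vec{y})$ with $\phi$ propositional, which is the canonical $\sig{p}{2}$-complete problem. The guiding principle is to realize the two quantifier blocks through the two computational layers exposed by the upper bound. The outer existential choice of $\vec{x}$ is encoded by propositional atoms occurring in the guessed $\cpnb$-disjunct, so that choosing a satisfying $\mkr$-model amounts to choosing an assignment to $\vec{x}$. The inner universal quantifier over $\vec{y}$ is encoded by a negative justification assertion $\lnot (t : \beta)$: its satisfiability expresses that $\beta$ is \emph{not} forced into the evidence of $t$ by the positive assertions together with the evidence-closure conditions of an $\mkr$-model, which is exactly the co-NP (universal) phenomenon witnessed by the short no-certificate of \Fref{thm:upper_bound_cpnb_j}. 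I would design positive assertions $p_i : \alpha_i$ that make the clauses of $\phi$ available as evidence in a way controlled by the chosen assignment to $\vec{x}$, and choose $t$ and $\beta$ so that $\beta$ becomes derivable precisely when some assignment to $\vec{y}$ falsifies $\phi$; the constructed formula is then $\J$-satisfiable if and only if the quantified formula is true.

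The hard part is this reduction: one must arrange the terms and evidence constraints so that the unification-based derivability of $\beta$ from the $\alpha_i$'s mirrors the inner universal quantifier faithfully, and prove both directions of correctness --- that a satisfying $\mkr$-model yields a witnessing assignment to $\vec{x}$ for which no $\vec{y}$-counterexample survives, and conversely. Since this reproduces the established $\sig{p}{2}$-completeness argument for justification-logic satisfiability, I would present the construction in that style and combine it with the membership argument above to obtain the stated completeness.
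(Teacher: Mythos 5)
Your upper bound is correct and is essentially the paper's own argument: the paper runs a degenerate tableau (only propositional rules fire, since $\alpha$ contains no probabilistic operators), nondeterministically selects a world path, and hands the conjunction of signed basic formulas on that path to the $\conp$ procedure of \Fref{thm:upper_bound_cpnb_j}; guessing a disjunct of $\cpnb(\alpha)$ and checking it with the same oracle, as you do, is the identical $\np^{\conp}=\sig{p}{2}$ computation.

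The gap is in the lower bound. The paper does not prove $\sig{p}{2}$-hardness at all: it invokes it as a known result of Achilleos \cite{ach15}. You instead announce a reduction from the truth problem for $\exists\vec{x}\,\forall\vec{y}\,\phi(\vec{x},\vec{y})$, but what you supply is only the guiding intuition (existential block $\rightarrow$ propositional atoms; universal block $\rightarrow$ a negative justification assertion $\lnot(t:\beta)$ whose truth asserts non-derivability). The actual construction --- the choice of the positive assertions $p_i:\alpha_i$, of the term $t$ and the formula $\beta$, and the proofs of both correctness directions --- is exactly where all the difficulty lies, and it is not routine: derivability of $t:\beta$ from positive assertions in $\J$ is governed by unification of $\beta$ against a derivation tree whose leaves carry axiom schemes (this is precisely the mechanism behind \Fref{thm:upper_bound_cpnb_j}), so encoding ``some assignment to $\vec{y}$ falsifies $\phi$'' into derivability requires making that unification machinery simulate Boolean evaluation ranging over all of $\vec{y}$. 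That is the content of Achilleos's theorem, a substantial result in its own right, not a step one can wave through. As written, your proposal establishes membership and then defers hardness to ``the established argument,'' i.e.\ it ultimately rests on the same external result as the paper; that is a legitimate way to close the proof, but then the honest move is to cite \cite{ach15} outright rather than to present an unexecuted construction plan as if it were part of the proof.
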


\begin{proof}
Let $\alpha \in \lanJ$ be the formula that is tested
for satisfiability. The upper bound follows by Kuznets's algorithm~\cite{Kuz00CSLnonote} which can be
described by the following steps, using our notation:
\begin{enumerate}
\item 
Create a node with $w~\true~\alpha$. Apply the propositional rules for as long as possible.
Non-deterministically choose a world path $p_{w_{ij}}$. 
This can be done in non-determinstic polynomial time.
\item
Verify that $B_{p_{w_{ij}}}$ is satisfiable using the $\conp$-algorithm of \Fref{thm:upper_bound_cpnb_j}.
\end{enumerate}
Observe that there is no rule that produces new worlds
in the above algorithm. We have presented
the algorithm using world signs in order to be
consistent with our tableau notation.
The lower bound follows by a result of 
Achilleos~\cite{ach15}. 
\end{proof}

\subsubsection{Probabilistic Justification Logic}

Let $M$ be an $\mkr$-model. Based on $M$ we can define the 
 evaluation $v_M$ as follows:
\begin{center}
for every $\beta \in \basis(\lanJ)$, $v_M(\beta) = \true$
if and only if $M \models \beta$.
\end{center}

So, if we set $\logic = \J$, $\lan=\lanJ$ and we define the
evaluations as above, 
we can define the non-iterated probabilistic
logic over $\J$, which is the logic $\PJ$.
For the complexity of the satisfiability problem in $\PJ$, we have the following corollary:

\begin{corollary}
$\PJSat$ is $\sig{p}{2}$-complete.
\end{corollary}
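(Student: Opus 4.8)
The plan is to derive the result by instantiating, with $\logic = \J$, the general complexity machinery already developed for the non-iterated case, and to treat the upper and lower bounds separately.

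For the upper bound, I would start from \Fref{thm:upper_bound_cpnb_j}, which tells us that the satisfiability problem for $\cpnb$-formulas in $\J$ lies in $\conp$. Instantiating \Fref{thm:upper_bound_pl} with $\cclass = \conp$ then yields $\PJSat \in \np^{\conp}$ immediately. The remaining step is the standard collapse $\np^{\conp} = \np^{\np} = \sig{p}{2}$, which holds because any query to a $\conp$-oracle can be simulated by the corresponding query to an $\np$-oracle followed by negating the answer; this is well within the power of the guessing machine. Hence $\PJSat \in \sig{p}{2}$.

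For the lower bound, I would exploit the fact that $\PJ$ is an extension of $\J$: by the grammar fixed for $\lanPL$, we have $\lanJ \subseteq \lanPJ$, so every justification formula $\alpha$ is itself a $\PJ$-formula. The key observation, which I would record explicitly, is that such an $\alpha$ is $\PJ$-satisfiable if and only if it is $\J$-satisfiable. Indeed, a satisfying $\mkr$-model gives a one-world $\PJ$-model with $W = \{w\}$, $\mu(\{w\}) = 1$ and $v_w$ the evaluation of that $\mkr$-model, so that $[\alpha] = W$ and therefore $M \models \alpha$ by the semantics of $\PL$; conversely, any $\PJ$-model with $M \models \alpha$ has $[\alpha] = W \neq \emptyset$, so picking any world $w$ produces an evaluation (hence an $\mkr$-model) satisfying $\alpha$. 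Consequently the identity map is a trivial polynomial-time reduction from $\JSat$ to $\PJSat$, and since $\JSat$ is $\sig{p}{2}$-hard by \Fref{thm:J_compl_bounds}, so is $\PJSat$.

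Combining the two bounds gives $\sig{p}{2}$-completeness. I do not expect a genuine obstacle here, as both directions are direct applications of results established earlier in the paper; the only points needing a line of justification are the oracle identity $\np^{\conp} = \sig{p}{2}$ and the preservation of $\J$-satisfiability when a justification formula is viewed inside $\PJ$, both of which are routine.
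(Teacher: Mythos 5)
Your proof is correct and takes essentially the same route as the paper: the upper bound comes from instantiating \Fref{thm:upper_bound_pl} with the $\conp$ bound of \Fref{thm:upper_bound_cpnb_j}, and the lower bound from the fact that $\PJ$ extends $\J$ together with the $\sig{p}{2}$-hardness of $\JSat$ (\Fref{thm:J_compl_bounds}). The two details you spell out---the identity $\np^{\conp}=\np^{\np}=\sig{p}{2}$ and the fact that an $\lanJ$-formula is $\PJ$-satisfiable iff it is $\J$-satisfiable (via the one-world model)---are exactly the routine facts the paper leaves implicit, so your write-up is simply a more explicit version of the paper's argument.
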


\begin{proof}
The upper bound follows from 
Theorems \ref{thm:upper_bound_cpnb_j} and 
\ref{thm:upper_bound_pl}.
Since $PJ$ is an extension of $\J$ and $\JSat$ is $\sig{p}{2}$-hard (\Fref{thm:J_compl_bounds})
we get the lower bound. 
\end{proof}

Now we will present the iterated probabilistic justification logic $\PPJ$. 
According to our previous definitions, in order to
define the language of $\PPJ$, which is called $\lanPPJ$, it suffices to define the basic formulas of $\PPJ$.
So, we define  $\basis(\lanPPJ)  = \Prop \cup \{ t : A ~|~ A \in \lanPPJ \}$.
For the convenience of the reader we give the complete definition of $\lanPPJ$, i.e.
\[
A ::= p ~|~ t:A ~|~ \lnot A~|~ A \land A ~|~ P_{\geq s} A,
\]
where $p \in \Prop$ and $s \in \mathbb{Q} \cap [0,1]$. We observe that
the basic formulas of $\PPJ$ may contain random formulas of $\PPJ$.
So, the reader might think that the basic formulas do not have much simpler structure than
the normal formulas of $\PPJ$ . However, the complex formulas of $\PPJ$ can appear only in the
scope of justification terms. In $\PPJ$, since there is no axiom that can draw a formula
from the scope of a justification term (e.g. $t: A \to A$), formulas that
appear under the scope of a justification operator practically behave as atomic propositions.
So, decidability and complexity of basic $\PPJ$-formulas is much simpler to prove than
decidability and complexity of random $\PPJ$-formulas.

In order to define the $\PPJ$-models we have to define evaluations for basic $\PPJ$-formulas. Since
basic $\PPJ$-formulas resemble basic $\J$-formulas, it makes sense to extend the definition
of $\mkr$-models in order to define evaluations for $\PPJ$. So,
as in the case of logic $\J$ we have
to present the deductive system of $\PPJ$ first.
This system is presented in
\Fref{tab:systemPPJ}. The axiomatization of $\PPJ$
is a combination of the axiomatization for $\LPPOne$~\cite{ograma09} and of the axiomatization for
the basic justification logic $\J$.
Axiom $\pos$ corresponds to the fact that the
probability of truthfulness of every 
formula is at least $0$ (the
acronym $\pos$ stands for 
$\mathsf{n}$on-$\mathsf{n}$egative).
Observe that by substituting 
$\lnot A$ for $A$ in $\pos$, we have $P_{\geq 0} \lnot A$,
which by our syntactical abbreviations is
$P_{\leq 1} A$. Hence axiom $\pos$ also corresponds to
the fact that the probability of
truthfulness for every formula
is at most $1$.
Axioms $\LOne$ and $\LTwo$ describe some properties of
inequalities (the $\mathsf{L}$ in
$\LOne$ and $\LTwo$ stands for $\mathsf{l}$ess).
Axioms $\AddOne$ and $\AddTwo$ correspond to the additivity of probabilities for disjoint events (the $\mathsf{Add}$ in $\AddOne$ and $\AddTwo$
stands for $\mathsf{add}$itivity).
Rule $\CE$ is the probabilistic analogue
of the necessitation rule in
modal logics (hence the acronym $\CE$
stands for $\mathsf{p}$robabilist 
$\mathsf{n}$ecessitation): if a 
formula is valid, then it has probability $1$. 
Rule $\ST$ intuitively states that if the
probability of a formula is
arbitrary close to $s$, then it is at least $s$.
Observe that the rule $\ST$ is infinitary in the sense that
it has an infinite number of premises. 
It corresponds to the 
Archimedean property for the real numbers. The acronym $\ST$
stands for $\mathsf{st}$rengthening, since the 
statement of the
result is stronger than the statement of the premises.
Rule $\ST$ was introduced in~\cite{ognjanovicR00,ravskovicO99}
so that strong completeness
for probabilistic logics could be proved.
We recall that a logical system is strongly
complete if and only if every consistent set (finite or
infinite) has a model. As it is shown in \cite{ognjanovicR00,ravskovicO99},
languages used for probabilistic logics are
non-compact, so the proof of strong completeness is
impossible without an infinitary rule.

\begin{table}
{
\centering
\renewcommand*{\arraystretch}{1.25}
\begin{tabular}{|c l|}
\hline 
Axioms: &\\
$\axP$ & 
finitely many axiom schemata axiomatizing\\
& classical propositional logic in the language $\lanPPJ$\\
$\pos$ & $\vdash P_{\geq 0} A$\\
$\LOne$ & $\vdash P_{\leq r} A \to P_{< s} A$, where $s > r$\\
$\LTwo$ & $\vdash P_{< s} A \to P_{\leq s} A$\\
$\AddOne$ & $\vdash  P_{\geq r} A \land P_{\geq s} B \land P_{\geq 1} \lnot (A \land B) \to P_{\geq \min(1, r+s)} (A \lor B)$\\
$\AddTwo$ & $\vdash P_{\leq r} A \land P_{< s} B \to P_{<r+s} (A \lor B)$, where $r+s \leq 1$\\
$\axJ$ & $\vdash s : (A \to B) \to
( t :A \to s \cdot t : B ) $\\
$\axPlus$ & $\vdash  ( s: A \lor t : A  ) \to  s+t: A$\\
Rules: &\\
$\MP$ & if $T \vdash A$ and  $ T \vdash A \to B$ 
then $T \vdash B$\\
$\CE$ & if $\vdash A$ 
then $\vdash P_{\geq 1 } A$\\
$\ST$ & if $T \vdash A \to P_{\geq s - \frac{1}{k} } B$
for every integer $k \geq \frac{1}{s}$ and $s > 0$\\
& then $T \vdash A \to P_{\geq s } B$\\
$\ANE$ & $\vdash {!^{n}} c : {!^{n-1}} c : \cdots : 
{!c} : c : A$, where $c \in \Con$, $A$ is\\
& an instance of some $\PPJ$-axiom and $n \in \Nat$\\
\hline
\end{tabular}
\caption{\label{tab:systemPPJ} The Deductive System $\PPJ$}
}
\end{table}

So, now we can extend \Fref{def:mkr_model} to the basic formulas
of $\lanPPJ$. Recall that basic $\PPJ$-formulas may contain iterated
probabilistic operators under the scope of justification terms.
This is the reason why we cannot use the standard $M$-models for
basic $\PPJ$-formulas. However since, as we mentioned before,
$\PPJ$-formulas under the scope of justification operators
practically behave as atomic propositions the only difference
between between extended $\mkr$-models and normal ones
is that in the former we have the logic $\PPJ$ in the place
where the later have the logic $\J$.

\begin{definition}[Extended $\mkr$-Model]
An extended $\mkr$-model is a pair $\langle v, \evid
\rangle$, where $v: \Prop \to \{\true, \false\}$
and $\evid: \Tm \to \powerset(\lanPPJ)$
such that for every
$s, t \in \Tm$, for $c \in \Con$ and 
$A, B \in \lanPPJ$, for $C$ being an axiom
instance of $\PPJ$ and $n \in \Nat$ we have
\begin{enumerate}
\item
$\big(A \to B \in \evid (s) \text{ and } A \in \evid (t) \big)
\Longrightarrow B \in \evid (s \cdot t)$~;
\item
$\evid(s) \cup \evid(t) \subseteq \evid(s + t)$~;
\item
${!^{n-1} c} : {!^{n-2}} c: \cdots : !c : c : C \in \evid (!^n c)$.
\end{enumerate}
\end{definition}
We can define evaluations based on extended $\mkr$-models 
 in the same way as for the standard $\mkr$-models.
So a $\PPJ$-model is a $\PPL$-model where the evaluations are
based on extended $\mkr$-models. This completes the definition
of semantics for $\PPJ$.

Now we are ready to present the complexity bounds for $\PPJSat$.
\begin{theorem}
\label{thm:upper_bound_cpnb_PPL}
The satisfiability problem for $\cpnb$-formulas
in logic $\PPJ$ belongs to $\conp$.
\end{theorem}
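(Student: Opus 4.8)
The plan is to mirror the proof of \Fref{thm:upper_bound_cpnb_j} almost verbatim, exploiting the observation that $\PPJ$-formulas occurring under the scope of a justification term behave as opaque atoms: since $\PPJ$ has no axiom that extracts a formula from the scope of a term (no $t : A \to A$), membership of a formula in $\evid(t)$ is governed purely by the term-algebra closure conditions of an extended $\mkr$-model. First I would dispose of the propositional part. Given a $\cpnb$-formula $a$ of $\PPJ$, if some $p \in \Prop$ occurs both positively and negatively then $a$ is clearly unsatisfiable, and this is checkable in polynomial time; otherwise the atomic literals are jointly satisfiable and the satisfiability of $a$ reduces entirely to its justification assertions.

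Write the positive assertions of $a$ as $t_i : A_i$ and the negative ones as $s_j : B_j$. Any extended $\mkr$-model satisfying the positive literals must contain, in its evidence function, the minimal set generated by the seeds $A_i \in \evid(t_i)$ together with all constant-justified axiom instances (condition~3 of the extended $\mkr$-model, now ranging over every $\PPJ$-axiom, including $\pos$, $\LOne$, $\LTwo$, $\AddOne$, $\AddTwo$), closed under the application rule $\axJ$ and the sum rule $\axPlus$. Hence $a$ is satisfiable if and only if this minimal closure excludes every $B_j$ from $\evid(s_j)$, and $a$ is unsatisfiable exactly when, for some $j$, the closure forces $B_j \in \evid(s_j)$.

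As in \Fref{thm:upper_bound_cpnb_j}, I would witness unsatisfiability by a short no-certificate: guess a negative assertion $s_j : B_j$ together with a derivation that places $B_j$ into $\evid(s_j)$. Such a derivation follows the $\cdot,+,!$ structure of the term $s_j$, presented as a finite tree whose leaves are labelled either by an axiom scheme (a leaf whose term has the form $!^{n}c : \cdots : c$) or by one of the $A_i$ (a leaf whose term is $t_i$), with interior nodes resolved by the application and sum closure conditions. Because $\PPJ$ is axiomatized by finitely many schemes, each constant still justifies only finitely many schemes, so the guess is finite; I would then verify it by bottom-up unification of the schematic formulas along the tree, representing formulas as directed acyclic graphs and running Robinson's algorithm, checking finally that the formula synthesized at the root unifies with $B_j$.

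The one genuinely new point over the $\J$-case is that the formula language now carries the connectives $P_{\geq s}$ and that the probabilistic axiom schemes contain rational parameters (for instance the $\min(1,r+s)$ in $\AddOne$, or the side conditions $s > r$ of $\LOne$ and $r+s \leq 1$ of $\AddTwo$). I would treat $P_{\geq s}$ as an ordinary connective for unification, so that $P_{\geq s}C$ unifies with $P_{\geq s'}C'$ only when $s = s'$ and $C$ unifies with $C'$, and carry the rational parameters of the probabilistic schemes as auxiliary unification variables constrained by the (essentially linear) side conditions of those schemes; these constraints are simple to solve, and the rationals stay bounded by those occurring in $B_j$, so all sizes remain polynomial. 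This is the step that must be handled with care, but it does not alter the shape of the argument: the certificate has polynomial size and is polynomial-time checkable, so the complement of the satisfiability problem lies in $\np$, and therefore $\cpnb$-satisfiability in $\PPJ$ belongs to $\conp$.
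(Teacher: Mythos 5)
Your overall strategy coincides with the paper's: dispose of the propositional literals, reduce satisfiability to the justification assertions via the minimal evidence closure, and witness unsatisfiability by a short no-certificate consisting of a negative assertion $s_j : B_j$ together with a tree describing the construction of the term $s_j$, verified by bottom-up Robinson unification on dag-represented formulas, with the rational parameters of the probabilistic schemes treated as constrained unification variables. The genuine gap is precisely at the step you flag as needing care: the side condition contributed by $\AddOne$ is $l = \min(1, r+s)$, and this is \emph{not} ``essentially linear''. A system of linear (in)equalities together with $\min$-constraints is not a linear program, and if the guessed derivation tree uses the scheme $\AddOne$ in $k$ places, resolving the $\min$'s deterministically during verification would naively require testing feasibility of up to $2^k$ linear systems; so the claim that the constraints are ``simple to solve'' in polynomial time is unjustified as written, and it is exactly the point where $\conp$-membership could fail.

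The repair --- and it is what the paper does --- is to push the resolution of each $\min$-constraint into the certificate itself: $l = \min(1, r+s)$ holds if and only if one of the two linear systems $\{l = r+s,\ r+s \leq 1\}$ or $\{l = 1,\ r+s > 1\}$ is satisfiable, so the no-certificate additionally records, for each occurrence of $\AddOne$ in the tree, which of the two alternatives is chosen. After this choice the verifier faces an honest linear program (together with the bounds $0 \leq r \leq 1$ on every rational variable), whose feasibility it decides deterministically in polynomial time by Karmarkar's algorithm, alongside polynomial-time Robinson unification; note also that the unification of $P_{\geq r} A$ with $P_{\geq s} B$ should add the equation $r = s$ to this system rather than demand syntactic identity of the subscripts. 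With these amendments your argument becomes the paper's proof of \Fref{thm:upper_bound_cpnb_PPL}.
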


\begin{proof}
We recall that $\cpnb$-formulas in $\lanPPJ$
are conjunctions of positive and negative atomic propositions
and positive and negative formulas of the form $t:A$, where
$A \in \lanPPJ$.

Since the $\cpnb$-formulas in 
$\lanPPJ$ resemble the
$\cpnb$-formulas in $\lanJ$ we will
use a slight variation of the algorithm in \Fref{thm:upper_bound_cpnb_j}
for deciding the satisfiability of $\cpnb$-formulas.
It is not difficult to observe that the algorithm of \Fref{thm:upper_bound_cpnb_j} does
not depend on what the axioms of the logic are, as long
as they are finitely many. Assume that $a$ is the $\cpnb$-formula
of logic $\PPJ$ that we want to test for satisfiability.
As in the proof of \Fref{thm:upper_bound_cpnb_j}
we can choose a formula $t:B$ that appears negatively in $a$, guess a tree that describes the
construction of $t$, assign to the leaves of the tree $\PPJ$-axioms and
formulas and then verify that our guess is correct using unification. 
However, since the axioms of $\PPJ$ have different form than
the axioms of $\J$ we have to modify the unification algorithm. 
Simple unification is not sufficient any more, since the axioms
of $\PPJ$ come with linear conditions. In the rest of the proof
we explain that the verification can be done in polynomial time by
using a unification algorithm and by testing a linear system for satisfiability.

Whereas for $\lanJ$ we need two kinds of schematic variables (for terms and formulas),
for $\lanPPJ$ we need three kinds of
schematic variables: for terms, formulas and
rational numbers. Also, because of the
side conditions that come with the axioms 
$\LOne$ and $\AddTwo$ our schematic formulas 
should be paired with systems of linear inequalities.
For example, the scheme $\LOne$ should be represented 
by the schematic formula
$P_{\leq r} A \to P_{< s} A$ (with the schematic 
variables $r$, $s$, and~$A$) together with the 
inequality $r <s$,
whereas
a scheme that is obtained by a conjunction
of the schemata $\LOne$ and $\AddTwo$ should be
represented as
\[
\big( P_{\leq r_1} A_1 \to P_{< s_1} A_1 \big ) \land 
\big ( P_{\leq r_2} A_2 \land P_{< s_2} B_2 \to 
P_{<r_2+s_2} (A_2 \lor B_2) \big )
\]
together with the inequalities
\[
\big \{ r_1 <s_1, r_2 + s_2 \leq 1 \big \}~.
\]
We should not forget that the rational variables
belong to $\Rat \cap [0,1]$. 
So we have to add constraints like
$0 \leq r \leq 1$.
Hence in addition to
constructing unification equations we need to
take care of the linear constraints.
For instance, in order to unify
the schemata $P_{\geq r} A$ and 
$P_{\geq s} B$ the algorithm has to unify $A$ and~$B$, 
and to equate $r$ and $s$, i.e.~it adds $r=s$ to the
linear system. At the end the verification algorithm
will succeed only if the standard unification of formulas
succeeds and the linear system is solvable. 

Another complication are constraints
of the form 
\begin{equation}
\label{eq:nonlin:1}
l = \min(1, r + s)
\end{equation}
that originate from the scheme $\AddOne$.
Obviously, Eq.~\eqref{eq:nonlin:1} is not linear.
However, we find that  Eq.~\eqref{eq:nonlin:1}
has a solution if and only if one of the set of equations
\[
\{l = r+s, r+s\leq 1 \}
\text{ or }
\{l = 1, r+s > 1\}
\]
has a solution. Thus whenever we come to
an equation like~\eqref{eq:nonlin:1} we can non-deterministically
chose one of the equivalent set of equations and add it to the constructed linear system.

We conclude that we can guess a tree for $t$ and also a linear system in non-deterministic polynomial time.
We also find that the verification can be done in polynomial
time, since testing a linear system for satisfiability
can be done
in polynomial time \cite{Karmarkar84} and unification of formulas
can be checked in polynomial time using Robinson's algorithm. So, as in the case of $\J$ we can
show that the satisfiability problem
for $\cpnb$-formulas in $\PPJ$ has short no-certificates, i.e it belongs
to $\conp$.
\end{proof}

\begin{remark}
It might seem strange to the reader that the $\cpnb$-formulas of $\J$ and $\PPJ$ have the
same upper complexity bound. At first sight the $\PPJ$ $\cpnb$-formulas seem more complex,
since they may contain probabilistic operators (and iterations of them). However in 
the $\PPJ$ $\cpnb$-formulas, 
the probabilistic operators, may occur only under the scope of justification terms. This means that
the probabilistic operators cannot impose any conditions for satisfiability. For example in $t : P_{\geq s} A$,
the formula $P_{\geq s} A$ does not have to be satisfiable. So, probabilistic operators under the
scope of justification operators, behave as atomic propositions. In particular they are more
sophisticated than simple atomic propositions because they consist of rational numbers and formulas.
It is not difficult to see, that the only impact that this has on the standard decidability procedure 
for the logic $\J$ is
on the unification algorithm. That is why the biggest part of
the proof for \Fref{thm:upper_bound_cpnb_PPL} is devoted on showing
how the modified unification algorithm works and that it runs in polynomial time.
\end{remark}

\begin{corollary}
$\PPJSat$ is $\pspace$-complete.
\end{corollary}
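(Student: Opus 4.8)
The plan is to read off both bounds from theorems already in hand, so that the corollary reduces to a short bookkeeping step. For the upper bound, the first move is to feed the complexity of the base problem into the generic machinery: by \Fref{thm:upper_bound_cpnb_PPL} the satisfiability problem for $\cpnb$-formulas in $\PPJ$ lies in $\conp$, so instantiating $\cclass = \conp$ in \Fref{thm:upper_bound_ppl} gives $\PPJSat \in \pspace^{\conp}$. The second move is to discharge the oracle. Since $\conp \subseteq \pspace$, every query to the $\conp$-oracle can be answered by a polynomial-space subroutine executed in place, and reusing that workspace across successive queries keeps the entire computation in polynomial space; hence $\pspace^{\conp} = \pspace$, and therefore $\PPJSat \in \pspace$.

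For the lower bound, I would invoke \Fref{thm:lower_bound_PPL}. The reduction from $\DSat$ constructed there sends a modal formula $A$ to $f(A)$ by replacing every $\Box$ with $P_{\geq 1}$; since $A$ is built only from atomic propositions and the connectives $\lnot, \land, \Box$, the image $f(A)$ contains no justification terms and lies in the purely propositional fragment of $\lanPPJ$. Because $\Prop \subseteq \basis(\lanPPJ)$ and every classical truth assignment on $\Prop$ is realized by some extended $\mkr$-model (for instance, by taking $\evid(t) = \lanPPJ$ for all $t$, which trivially satisfies the closure conditions), the $\PPJ$-semantics agrees with the generic $\PPL$-semantics on such formulas. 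Hence the same reduction witnesses that $\PPJSat$ is $\pspace$-hard, exactly as in the treatment of the lower bound in \Fref{cor:LLP1_LPP2_compl}.

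Almost none of the real work sits in the corollary itself: the genuinely hard steps---the tableau-based upper bound of \Fref{thm:upper_bound_ppl} together with the $\conp$ bound of \Fref{thm:upper_bound_cpnb_PPL}, whose proof turns on the modified unification algorithm with linear side-conditions---have already been carried out. The only points that need any care here are the oracle collapse $\pspace^{\conp} = \pspace$ and the observation that the hardness reduction never leaves the common propositional fragment shared by all base logics. For this reason I expect no genuine obstacle; the statement follows by combining the upper and lower bounds just established.
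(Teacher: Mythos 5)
Your proposal is correct and takes essentially the same route as the paper, whose own proof of this corollary is just a two-line citation of Theorems \ref{thm:upper_bound_cpnb_PPL}, \ref{thm:upper_bound_ppl} and \ref{thm:lower_bound_PPL}. The two points you spell out---the oracle collapse $\pspace^{\conp} = \pspace$ and the check that the hardness reduction stays in the propositional fragment of $\lanPPJ$, where any truth assignment is realized by an extended $\mkr$-model (e.g.\ with $\evid(t) = \lanPPJ$ for all $t$)---are exactly the details the paper leaves implicit, and making them explicit is a sound filling-in rather than a deviation.
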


\begin{proof}
The upper bound follows from 
Theorems \ref{thm:upper_bound_cpnb_PPL} and
\ref{thm:upper_bound_ppl}.
The lower bound follows from \Fref{thm:lower_bound_PPL}.
\end{proof}

\section{Conclusion}
\label{sec:concl}

We have presented upper and lower complexity bounds for the satisfiability
problem in non-iterated and iterated probabilistic logics over any extension of classical
propositional logic. The aforementioned bounds are parameterized on the complexity of
satisfiability of conjunctions of positive
and negative formulas  that have neither a probabilistic nor a classical operator as their top-connectives. As an
application we have shown how tight bounds for the complexity of satisfiability in
non-iterated and iterated probabilistic logics
over classical propositional logic and justification
logic can be obtained.
It is interesting that both for classical
propositional logic and for the basic justification
logic $\J$ adding non-nested
probabilistic operators to the language does not increase the
complexity of the satisfiability problem.

Now we present some directions for further research.
The probabilistic logic of \cite{doderSO18_decidable} allows iterations of
some probabilistic operators, that are more complicated than the ones used in this
paper. It would be interesting to check whether the 
tableau procedure of our paper is applicable in this logic.
Another interesting question is the following:
Ra\v{s}kovi\'{c} et al.~\cite{raskovicMO08} define a 
probabilistic logic over classical propositional logic, 
called $\LPPS$, using
approximate conditional probabilities with the intention
to model non-monotonic reasoning. The satisfiability
problem for $\LPPS$ is again reduced to solving linear
systems. In~\cite{ognjanovicSS17} the logic of
\cite{raskovicMO08} is extended to a probabilistic
logic with approximate conditional probabilities over
justification logic. What are the complexity bounds for the
satisfiability problem
in the logics of \cite{raskovicMO08} and 
\cite{ognjanovicSS17}?

\bibliographystyle{abbrv}

\end{document}